\newcommand{\eqr}[1]{Eq.~\eqref{#1}}
\newtheorem{theorem}{Theorem}
\newtheorem{corollary}{Corollary}
\newtheorem{definition}{Definition}
\newtheorem{example}{Example}
\newtheorem{lemma}{Lemma}
\newtheorem{assumption}{Assumption}
\title{Equilibrium and Learning in Queues with Advance Reservations}
\author{Eran Simhon and David Starobinski \\  Boston University, College of Engineering}
\begin{document}
\maketitle

%%%%%%%%%%%%%%%%

% Outcomment only when entries are known. Otherwise leave as is and 
%   default values will be used.
%\setcounter{page}{1}
%\VOLUME{00}%
%\NO{0}%
%\MONTH{Xxxxx}% (month or a similar seasonal id)
%\YEAR{0000}% e.g., 2005
%\FIRSTPAGE{000}%
%\LASTPAGE{000}%
%\SHORTYEAR{00}% shortened year (two-digit)
%\ISSUE{0000} %
%\LONGFIRSTPAGE{0001} %
%\DOI{10.1287/xxxx.0000.0000}%

% Author's names for the running heads
% Sample depending on the number of authors;
% \RUNAUTHOR{Jones}
% \RUNAUTHOR{Jones and Wilson}
% \RUNAUTHOR{Jones, Miller, and Wilson}
% \RUNAUTHOR{Jones et al.} % for four or more authors
% Enter authors following the given pattern:
%\RUNAUTHOR{}

% Title or shortened title suitable for running heads. Sample:
% \RUNTITLE{Bundling Information Goods of Decreasing Value}
% Enter the (shortened) title:
%\RUNTITLE{}

\begin{abstract}
Consider a multi-class preemptive-resume $M/D/1$ queueing system that supports advance reservations (AR). In this system, strategic customers must decide whether to reserve a server in advance (thereby gaining higher priority) or avoid AR. Reserving a server in advance bears a cost. In this paper, we conduct a game-theoretic analysis of this system, characterizing the equilibrium strategies. Specifically, we show that the game has two types of equilibria. In one type, none of the customers makes reservation. In the other type, only customers that realize early enough that they will need service make reservations. We show that the types and number of equilibria depend on the parameters of the queue and on the reservation cost. Specifically, we prove that the equilibrium is unique if the server utilization is below 1/2. Otherwise, there may be multiple equilibria depending on the reservation cost.  Next, we assume that the reservation cost is a fee set by the provider. In that case, we show that the revenue maximizing fee leads to a unique equilibrium if the utilization is below 2/3, but multiple equilibria if the utilization exceeds 2/3. Finally, we study a dynamic version of the game, where users learn and adapt their strategies based on observations of past actions or strategies of other users. Depending on the type of learning (i.e., action learning vs.\ strategy learning), we show that the game converges to an equilibrium in some cases, while it cycles in other cases.
\end{abstract}

%\KEYWORDS{Queueing theory; Game theory; Advance reservations}

%%%%%%%%%%%%%%%%%%%%%%%%%%%%%%%%%%%%%%%%%%%%%%%%%%%%%%%%%%%%%%%%%%%%

%\documentclass[10pt]{article}
%\documentclass[manuscript, review, screen]{acmart}
%\setcitestyle{super,sort&compress}

%\IEEEauthorblockA{ College of Engineering, Boston University, Boston, MA 02215} 
%}

%\graphicspath{}

%\noindent

\setlength{\pdfpagewidth}{8.5in}

\setlength{\pdfpageheight}{11in}

\section{Introduction} \label{sec:intro}
%Advance reservation (AR) services form a pillar of the economy. They are widely
%deployed in the industries of transportation (e.g., for reserving airplane and train tickets), lodging
%(e.g., for booking hotel rooms), and health care (e.g., for scheduling medical appointments).
%In recent years, there has also been growing interest in applying AR in cloud computing. Examples include Haizea [1, 2], an open-source lease manager which has been integrated with the
%OpenNebula cloud computing platform [3], and IBM Platform Computing Solutions [4], a cluster
%workload management package. In both of these packages, the administrator can decide whether
%to enable AR and define a suitable AR pricing scheme.
Many services, such as health care, cloud computing and banking, combine both a first-come-first-served policy and advance reservations (AR). \ Advance reservations benefit a service provider since knowledge about future demand can improve resource management and  quality-of-service (e.g.,~\cite{charbonneau2012survey}). Customers are also motivated to reserve in advance, since it decreases their expected waiting time. However, typically,  reservations bear an additional cost for customers.  This cost can be a reservation fee, the time or resources required for making the reservation,  the cost of financing advance payment, or the cost of cancellation if needed.

Since the decision of a customer, about reserving a server in advance or not, affects the waiting time of other customers, game theory is the solution of choice for studying such systems. Although there exists a rich literature on advance reservations, works that study advance reservation systems as a game are rare. The strategic behavior of customers in a system that support AR is studied in \cite{simhon2014game} and \cite{simhon2015pricing}. These two papers study a \textit{loss system}, i.e., a system with no queue. In this paper, instead, we focus on a queueing system (i.e., customers that encounter a busy server wait for service).
This leads to a different model and, interestingly, more explicit results. We show that the server utilization (traffic load) plays in key role in the behavior of the system and, specifically, in the number of equilibria.  
 
 We assume that the time axis is divided into two time-periods: a \textit{reservation period} and a \textit{service period}. This restriction simplifies the analysis and is common in the literature of advance reservations (e.g.,  \cite{ virtamo1992model}, \cite{yessad2007r} and \cite{syed2008t}). It can also be found in real life applications. For example, some service providers do not allow same-day-reservations.
 
During the reservation period, each customer realizes that he/she will need service at a specific future time point. Upon such a realization, the customer decides whether or not to make a reservation. Customers are assumed to be strategic and rational. Thus, a customer will make a reservation only if it reduces his/her expected total cost which consists of the reservation cost (if making a reservation) and the cost of waiting.

We start the analysis by finding the equilibrium structure of the game. We show that there are two possible types of equilibria. In the first type, none of the customers makes AR, while in the second type customers that realize early enough that they will need future service make AR. We refer to those two types of equilibria as \textit{none-make-AR} and \textit{some-make-AR}, respectively. We show that if the utilization of the queue (i.e., the ratio between the arrival rate and the service rate) is smaller than $1/2$, then the game has a unique equilibrium. Low AR costs lead to a \textit{some-make-AR} equilibrium, while high AR cost lead to a \textit{none-make-AR} equilibrium. If the utilization is greater than $1/2$, however, there also exists a middle range of AR cost such that any cost in that range leads to three equilibria, namely one \textit{none-make-AR} and two \textit{some-make-AR} equilibria.

Next, we assume that the AR cost is a fee charged by the service provider.  We analyze the game from the prospective of a provider aiming to maximize its revenue from AR fees. We show that if the utilization is greater than $2/3$, then the revenue maximizing fee leads to multiple equilibria. Thus, charging that fee may yield the highest possible revenue for the provider but possibly also no revenue. 

Finally, we study a dynamic version of the game.  We use \textit{best response dynamics} (as in \cite{fudenberg1998theory}) and distinguish between \textit{strategy-learning} and \textit{action-learning}. In \textit{strategy-learning}, customers obtain information about  strategies adopted at previous steps, while in \textit{action-learning}, customers \textit{estimate} the previous strategies by obtaining information about the actions taken at previous steps. Our analysis shows that starting with any initial belief about customers behavior \textit{(i)}~when implementing \textit{strategy-learning}, the system always converges to an equilibrium; \textit{(ii)}~when implementing \textit{action-learning}, the system converges to a \textit{none-make-AR} equilibrium if it exists and cycles otherwise;  \textit{(iii)}~if the equilibrium is unique, more customers, on average, make reservations under \textit{action-learning} than under  \textit{strategy-learning}.

The rest of the paper is structured as follows. In Section \ref{related_work}, we review related work. In Section \ref{sec:The model}, we formally define the game. In Section \ref{sec_analysis}, we find the equilibrium structure of the game. In Section \ref{sec:rev}, we derive the revenue maximizing fee and resulting equilibria. In Section \ref{sec_learning}, we define and analyze dynamic versions of the game. Section \ref{conclusions} concludes the paper and suggests directions for future research. 

\section{Related Work} \label{related_work}
Strategic behavior in queues (also known as queueing games) was pioneered by \cite{naor1969regulation} and has been studied extensively since. In that seminal paper, the author studies an $M/M/1$ queue where customers  decide whether to join or balk after observing the queue length. \cite{hassin2003queue} and \cite{hassin2016queue} conduct an extensive review of the field of queueing games. Most related to our work, \cite{balachandran1972purchasing} analyzes  strategic behavior in priority queues and \cite{qiu2016managing} and \cite{hayel2016decentralized} study strategic behavior in $M/D/1$ queues. None of these works consider advance reservations. %Processor sharing queues were studied in \cite{fayolle1980sharing}. Competition between multiple queues was studied in \cite{chen2003price}. 

Advance reservations have been researched from various other perspectives in the literature, including scheduling and routing algorithms for communication networks, methods for revenue maximization, and performance analysis of queueing systems. The work in~\cite{wang2013dynamic} describes a distributed architecture for advance reservation, while~\cite{smith2000scheduling} proposes a scheduling model that supports AR and evaluates several performance metrics.
The work in~\cite{virtamo1992model} analyzes the impact of advance reservations on server utilization under a stochastic arrival model, and~\cite{guerin2000networks} analyzes the effect of AR on the complexity of path selection. In \cite{weatherford1998tutorial}, the author reviews models for revenue management of perishable assets, such as airline seats and hotel rooms, that extend to various industries. The work in~\cite{reiman2008asymptotically} considers admission control strategies in reservation systems with different classes of customers, while~\cite{bertsimas2003restaurant} deals with policies for accepting or rejecting restaurant reservations.
The effects of overbooking, cancellations and regrets on advance reservations are studied in~\cite{liberman1978hotel,quan2002price,nasiry2012advance}. None of these prior works considers the \emph{strategic} behavior of customers in making AR, namely, that decisions of customers are not only influenced by prices and policies set by providers but also by their beliefs about decisions of other customers.

\cite{simhon2014game} introduces AR games. In that paper, the authors consider a loss system (i.e., customers that finds all servers busy leave). The authors show that the game may have multiple equilibria, where in  one equilibrium the number of reservations is a random variable, while in the other equilibrium, none of the customers makes reservation. In \cite{simhon2015pricing}, the authors study a dynamic version of the game. The main difference between the model of our paper and the model presented in \cite{simhon2014game} and \cite{simhon2015pricing}  is that our paper focus on a queuing system, while these papers focus on a loss system. Specifically, our paper shows that the server utilization plays a key role in determining the number and structure of equilibria. The characterization of the equilibrium strategies in our paper is also much more explicit than that provided in~\cite{simhon2014game}. 

The concept of learning an equilibrium is rooted in Cournot's duopoly model \cite{cournot1897recherches} and has been extensively researched since. Traditionally, learning models are used for fixed-player games (i.e., the same players participate at each iteration), see~\cite{lakshmivarahan1981learning,fudenberg1998theory} and~\cite{milgrom1991adaptive}. Several papers have focused on learning under stochastic settings. For example, in \cite{liu2011strategic} customers choose between buying a product at full price or waiting for a discount period. Decisions are made based on observing past capacities. \cite{altman1998individual} analyze a processor sharing model. In this model, customers choose between joining or balking after observing the history. \cite{zohar2002adaptive} present a model of abandonment from unobservable queues. The decision is based on the expected waiting time which is formed through accumulated experience. \cite{fu2009learning} assume that the same set of players participate in a bid for wireless resources at each stage. However, the number of packets that need to be transmitted at each iteration is a random variable.

Different learning models differ by their learning rules. A learning rule defines what kind of information players gain and how they use it. In this paper, we focus on  \textit{best response dynamics}. According to this rule, which is rooted in Cournot's work, players observe the most recent  actions adopted by other players and  assume that the same actions will be adopted at the next step. Another popular learning rule is \emph{fictitious play} which assumes that at each iteration, players observe actions made by other players at all previous steps and best-respond to the empirical frequency of observed actions. This rule was suggested by \cite{brown1951iterative}. In contrast, \cite{littman1994markov} and \cite{tan1993multi} assume that players only observe their own payoffs and learn by trial  and error.  \textit{Reinforcement learning} is an example of such a learning rule.

Other relevant work includes~\cite{niu2012pricing}, which presents a theoretical model for pricing cloud bandwidth reservations, in order to maximize social welfare. The reservation fee of each customer is a function of his/her guaranteed portion instead of the actual amount of resources reserved, as considered in our models as well as many practical services. In~\cite{menache14demand}, the authors consider the problem of deciding which type of service a customer should buy from a cloud provider. More specifically, that study considers two options: on-demand, which means paying a fixed price for service, and spot, a service offered by Amazon EC2 that allows users to bid for spare instances. They propose a no-regret online learning algorithm to find the best policy. Our approach complements this work in several ways. First, our framework considers advance reservations (similar to Reserved Instances in Amazon EC2). Second, our models integrate the strategic behavior of all participants (i.e., both the customers and the provider).

\section{Game Description}\label{sec:The model}
We consider a preemptive-resume $M/D/1$ queue that supports advance reservations. In our model, there is a reservation period which covers $[-T,0]$. Each customer $k=1,2,...$ is associated with a request time $-T\le t_k \le 0$ and a desired service starting time (shortly noted as arrival time) $s_k>0$. That is, if $t_1<t_2$, then customer $1$ has the opportunity to reserve the server before  customer $2$. If $s_1<s_2$, then customer $1$ wishes to be served earlier than customer $2$. The service period starts only after the reservation period ends. The request time can be interpreted as how much time in advance a customer realizes that he/she will need service at a future time point.

The request times are derived from a general continuous distribution  with cumulative distribution function $F_T(\cdot)$. The arrivals follow a Poisson process with rate $\lambda$. The service time is  $1/\mu$ and we assume that  $\lambda < \mu$.

Each customer, at his/her request time, decides whether to make a reservation or not. We denote those two actions by $AR$ and $AR'$, respectively. If a customer makes a reservation but his/her desired service time is already reserved, the nearest future available time will be reserved for that customer. A customer that does not make a reservation is served on a first-come-first-served basis along periods of times over which the server is not reserved.

The total cost of each customer consists of the reservation cost $C$ (if making AR) and the cost of waiting which is a linear function of the waiting time. Without loss of generality, we assume that the cost of waiting is equal to the waiting time. Note that the waiting time when making AR is smaller than when not making AR. However, it may be greater than zero, since it is possible that the server is already reserved at the desired service time. For simplification, we assume that the service period is long enough such that we can ignore the transient phase before the queue reaches its steady state.

In a preemptive-resume queue, if a job is interrupted, then it later resumes and is not restarted.  Due to this property, if the server is idle and a customer is waiting for service, the customer will be served even if service cannot be completed due to an existing reservation (in this case the service will be preempted and later resumed). Hence, supporting advance reservations in a preemptive-resume queue does not impact the utilization of the server which is $\rho=\lambda/\mu$. Figure \ref{fig_ex} illustrates the model.

%To illustrate the model, consider the following example. A bank is open daily from 08:00 to 17:00 It allows AR, but not same-day reservations (i.e., reservations cannot be done after 8 am for the same day). A customer realizes at 20:00 that she will need service the following day at 14:00. She can either show up at 14:00 without making a reservation, or pay $C$ and reserve the earliest available time after 14:00.  

%%%%%ADD
\begin{figure}[t]
 \centering \includegraphics[scale=1.2]{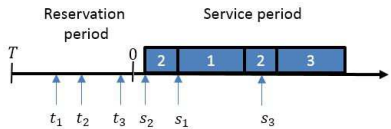}
  \caption{An illustration of the model with three customers.  Customer $1$ makes a reservation at time $t_1$, and is served upon arrival at $s_1$.  Customer $2$ also makes a reservation and is served upon arrival, but his/her service is preempted by customer $1$ which made a reservation earlier. Customer $3$ is served only when the service of customer $2$ is completed.}
    \label{fig_ex}
\end{figure}

Note that customers do not know a-priori what will be their waiting time if making or if not making AR. The decision is based on statistical information only, namely the values of $\lambda$, $\mu$ and $F_T$. However, once a customer decides to make a reservation, the system can provide him/her with the start and end times of the service.

\section{Equilibrium Analysis}\label{sec_analysis}
We can analyze this system as a priority queue where a priority between $0$ (lowest priority) and $1$ (highest priority)  is assigned to each customer. A customer with request time $t$ has priority $0$ if not making AR and priority $p=1-F_T(t)$ if making AR. Customers that share the same priority are served on a first-come-first-served basis. We refer to $p$ as the \textit{potential priority}. Due to the probability integral transformation theorem \cite[p. 320]{dodge2006oxford}, we know that $p$ is a random variable, uniformly distributed in $[0,1]$.

Since customers are statistically identical, we consider only symmetrical behavior. Thus, a decision of a tagged customer is a mapping of his/her potential priority $p$ to the probability of making AR. We denote this strategy function by $\sigma(p)$. Consider a tagged customer with potential priority $p$. We define $W(\cdot)$ to be a mapping of the strategy followed by the rest of the customers and the priority of the tagged customer to his/her expected waiting time. Thus, the expected waiting time of the tagged customer is $W(\sigma,p)$ if making AR and $W(\sigma,0)$ otherwise. Since customers are strategic, a customer with potential priority $p$ will make $AR$ only if
\begin{align}
W(\sigma,p)+C \le W(\sigma,0).
\end{align}

Next, we define a threshold strategy and show that this is the only strategy that can lead to equilibria.
\begin{definition}
Let $\tau\in(0,1]$. A strategy function $\sigma(p)$ is said to be a threshold strategy if it satisfies
\begin{equation}
\sigma(p)=\left\{\begin{array}{ll} 1 & \mbox{ if } p>\tau, \\ 0 & \mbox{ if } p\leq \tau. \\ \end{array}\right. \nonumber
\end{equation}
\end{definition}

\begin{lemma} \label{lemma_111}
At equilibrium, all customers follow a threshold strategy.
\end{lemma}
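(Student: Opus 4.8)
The plan is to reduce the whole claim to a single monotonicity property of the waiting-time function and then read off the threshold structure. Fix the strategy $\sigma$ used by all of the other customers, and for a tagged customer with potential priority $p$ define the \emph{net benefit} of reserving,
\[
B(p) := W(\sigma,0) - W(\sigma,p).
\]
By the decision rule stated just before the definition (reserving is optimal exactly when $W(\sigma,p)+C\le W(\sigma,0)$), the tagged customer's best response to $\sigma$ is to make AR precisely when $B(p)\ge C$. Since an equilibrium is a symmetric strategy $\sigma$ that is a best response to itself (for almost every $p$), the lemma follows once I show that $B(\cdot)$ is non-decreasing in $p$: then the set of priorities for which reserving is optimal is an upper interval, which is exactly the threshold form.

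The core step is therefore to establish that $W(\sigma,p)$ is \emph{non-increasing} in the tagged customer's own priority $p$, for any fixed $\sigma$. Here I would exploit the preemptive-resume discipline together with the continuity of the priority variable. Under $\sigma$, a customer with potential priority $u$ obtains actual priority $u$ only when reserving, so (using that $p$ is uniform on $[0,1]$) the rate of other customers who end up with priority strictly above a level $x$ is $\lambda\int_x^1 \sigma(u)\,du$, which is non-increasing in $x$. In a preemptive-resume queue a tagged customer at priority $p$ is delayed only by work belonging to customers of strictly higher priority (equal-priority ties occur with probability zero for $p>0$, the only atom being the mass of non-reserving customers at priority $0$). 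I would make the monotonicity rigorous by a sample-path coupling: on any fixed arrival realization, every job that can delay a tagged customer at priority $p_2$ has priority $>p_2>p_1$ and hence also delays a tagged customer at priority $p_1$, so the realized delay at $p_2$ never exceeds that at $p_1$. Taking expectations gives $W(\sigma,p_2)\le W(\sigma,p_1)$ for $p_1<p_2$.

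With monotonicity in hand, the conclusion is immediate. Since $B(0)=0$, $B$ is non-decreasing, and $C>0$, the set $\{p:B(p)\ge C\}$ is an upper interval $(\tau,1]$, where $\tau:=\inf\{p:B(p)\ge C\}$ (with $\tau=1$ when the set is empty, i.e.\ the \emph{none-make-AR} case). Hence the best response to $\sigma$ is the threshold strategy with threshold $\tau$, up to the measure-zero indifference set at $p=\tau$. Because the equilibrium $\sigma$ must coincide almost everywhere with a best response to itself, $\sigma$ is itself a threshold strategy, which is what we wanted.

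The main obstacle I anticipate is the rigorous monotonicity/coupling argument in the last details rather than the logical skeleton: one must handle the atom of priority-$0$ (non-reserving) customers correctly, argue that exact ties at an interior priority $p>0$ have no effect, and ensure the coupling genuinely respects preemption and resumption of interrupted service. A secondary, purely technical point is the treatment of the indifference set $\{B(p)=C\}$ and of possible flat stretches of $B$ (which arise precisely on intervals where $\sigma\equiv 0$): these do not change the upper-interval structure, so they affect the best response only on a set that is irrelevant to equilibrium consistency, but the statement should acknowledge that the threshold is pinned down only up to this measure-zero ambiguity.
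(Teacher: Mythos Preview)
Your proposal is correct and follows essentially the same approach as the paper: both arguments hinge on the single observation that $W(\sigma,p)$ is non-increasing in $p$, from which the upper-interval (threshold) structure of the best response is immediate. The paper simply asserts this monotonicity and then does a short case analysis on whether $W(\sigma,p)+C=W(\sigma,0)$ has one solution, an interval of solutions, or none; your sample-path coupling supplies a justification the paper omits, and your upper-interval formulation of $\{p:B(p)\ge C\}$ packages the same case split more compactly.
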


\begin{proof}
Consider any  strategy function $\sigma$. Since the expected waiting time  is non-increasing with the priority, there is either a single potential priority, or an interval of potential priorities, or no potential priority such that
\begin{align} \label{eq_170}
W(\sigma,p)+C=W(\sigma,0).
\end{align}
Note that the left hand side of \eqr{eq_170} is the expected total cost if making AR, while the right hand side is the expected total cost if not making AR. If Eq.~\eqref{eq_170} holds for a single value $p{'}$, then a customer with potential priority greater (respectively, smaller) than $p{'}$ is better off making (respectively, not making) AR. Therefore, $\sigma$ is an equilibrium strategy only if it is a threshold strategy with threshold $\tau=p{'}$.

If Eq.~\eqref{eq_170} holds for an interval of values $[p{'},p{''}]$, then all customers with potential priority $p\in [p{'},p{''}]$ do not make AR (otherwise, $W(\sigma,p)$ would not be a constant over that interval). Therefore, $\sigma$ is an equilibrium strategy only if it is a threshold strategy, with threshold $\tau=p{'}$.

Finally, suppose that Eq.~\eqref{eq_170} does not hold for any $p\in [0,1]$. If $W(\sigma,p)+C> W(\sigma,0)$ for all $p\in [0,1]$, then all customers are better off not making AR. Therefore, $\sigma$ is an equilibrium strategy only if it is a threshold strategy, with threshold $\tau=1$.

Note that a situation where $W(\sigma,p)+C< W(\sigma,0)$  for all $p\in [0,1]$ does not exist, since a customer with potential priority zero has the same expected waiting time if making or avoiding AR.
\end{proof}
Next, we define two types of equilibria.
\begin{definition}
An equilibrium strategy with threshold $\tau$ is called a \textit{some-make-AR} equilibrium if $\tau<1$.
\end{definition}

\begin{definition}
An equilibrium strategy with threshold $\tau$ is called a \textit{none-make-AR} equilibrium if $\tau=1$.
\end{definition}

Since the structure of the equilibrium depends on the reservation cost, we aim to determine the equilibrium to which a given reservation cost leads. Given that all customers follow a threshold strategy, we define a \textit{threshold customer} to be a customer with potential priority equals exactly to the threshold followed by all other customers.

Given a  strategy with threshold $\tau$, a threshold customer that makes AR observes three priority classes: 
\begin{enumerate}

\item A lower priority class which contains all customers with potential priority smaller than the threshold customer (none of them makes AR). The arrival rate of customers belonging to this class is $\lambda\tau$.
\item A priority class which contains only the threshold customer (since the potential priority is a continuous random variable, the probability that two customers will have the same potential priority is zero). Thus, the arrival rate of customers belonging to this class is $0$. 
\item A higher priority class which contains all customers with greater potential priority (they all made AR before the threshold customer). The arrival rate of customers belonging to this class is $\lambda(1-\tau)$.
\end{enumerate}
A threshold customer that does not make AR only observes two classes: 
\begin{enumerate}
\item A priority class which contains the threshold customer and all customers with smaller potential priority. The arrival rate of customers belonging to this class is $\lambda\tau$.
\item A higher priority class which contains all customers with greater potential priority. The arrival rate of customers belonging to this class is $\lambda(1-\tau)$.
\end{enumerate}

 Based on the priority classes defined above, we find the expected waiting of the threshold customer if making or not making AR. We apply the known formula of the waiting time in an $M/G/1$ queue with preemptive-resume priorities \cite[p.175]{conway2012theory} and obtain the following:
\begin{enumerate}
\item The expected waiting time of the threshold customer if making AR is
\begin{align}\label{eq-8}
W_{AR}(\tau)=\frac{\mu-\frac{\lambda}{2}\left(1-\tau\right)}{\left(\mu-\lambda\left(1-\tau\right)\right)^2}-\frac{1}{\mu}.
\end{align}
\item The expected waiting time of the threshold customer if not making AR is
\begin{align}\label{eq-9}
W_{AR'}(\tau)=\frac{\mu-\frac{\lambda}{2}}{\left(\mu-\lambda\left(1-\tau\right)\right)\left(\mu-\lambda\right)}-\frac{1}{\mu}.
\end{align}
\end{enumerate}

The condition for threshold $\tau<1$ to be a \textit{some-make-AR} equilibrium is
\begin{align}\label{eq-12}
C +W_{AR}(\tau) = W_{AR'}(\tau).
\end{align} That is, a customer with potential priority equals to the threshold is indifferent between the two actions.
The condition for threshold $\tau=1$ to be a \textit{none-make-AR} equilibrium is
\begin{align}\label{eq-120}
C + W_{AR}(1) \ge W_{AR'}(1).
\end{align} That is, a customer with potential priority $1$ (and hence, all customers) are better off not making AR.

By isolating $C$ in \eqr{eq-12}, we define $C(\tau)$ to be a function that maps a threshold to the reservation cost that leads to that threshold
\begin{align}\label{eq1}
C(\tau)\triangleq\frac{\lambda\cdot \mu\cdot \tau}{2\left(\mu-\lambda\right)\cdot \left(\mu-\lambda\left(1-\tau\right)\right)^2}.
\end{align}
We conclude that given reservation cost $C$, the  threshold $\tau^e\in (0,1)$ represents a \textit{some-make-AR} equilibrium if and only if $C=C(\tau^e)$. The threshold $\tau^e=1$ represents a \textit{none-make-AR} equilibrium if and only if $C\ge C(1)$. In order to find the equilibrium structure, we next find the properties of $ C(\tau)$. 
\begin{lemma}
If $\rho\le1/2 $, then $C(\tau)$ is a monotonically increasing function. If $\rho>1/2 $, then $C(\tau)$ is a unimodal function with a global maximum.
\end{lemma}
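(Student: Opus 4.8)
The plan is to determine the monotonicity of $C(\tau)$ on its domain $(0,1]$ by analyzing the sign of its derivative. Since $\lambda<\mu$, the prefactor $\frac{\lambda\mu}{2(\mu-\lambda)}$ in \eqr{eq1} is a strictly positive constant, and the denominator $(\mu-\lambda(1-\tau))^2$ is strictly positive throughout. Hence $C(\tau)$ is a positive constant multiple of the single factor $g(\tau)\triangleq \tau/(\mu-\lambda(1-\tau))^2$, so $C'$ and $g'$ share the same sign everywhere, and it suffices to study $g'$.

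The next step is to differentiate $g$ by the quotient rule. Writing $d(\tau)=\mu-\lambda(1-\tau)$, so that $d'(\tau)=\lambda$, one obtains
\begin{align}
g'(\tau)=\frac{d(\tau)-2\lambda\tau}{d(\tau)^3}.
\end{align}
The denominator $d(\tau)^3$ is positive, so the sign of $g'(\tau)$—and therefore of $C'(\tau)$—is governed entirely by the numerator $N(\tau)\triangleq d(\tau)-2\lambda\tau=\mu-\lambda-\lambda\tau$. The crucial observation is that although $g$ is a ratio of polynomials, the quadratic contributions cancel and $N(\tau)$ is \emph{linear} in $\tau$ with strictly negative slope $-\lambda$. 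This linearity is what produces the clean dichotomy.

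I then carry out a case analysis on the endpoint values of $N$. Because $\lambda<\mu$, we have $N(0)=\mu-\lambda>0$, whereas $N(1)=\mu-2\lambda$ has the sign of $\tfrac12-\rho$. If $\rho\le 1/2$, then $N(1)\ge 0$, and since $N$ is decreasing and nonnegative at its right endpoint, it is nonnegative on all of $[0,1]$; hence $C'(\tau)\ge 0$ and $C$ is monotonically increasing. If $\rho>1/2$, then $N(0)>0>N(1)$, and linearity guarantees a single root $\tau^*=(\mu-\lambda)/\lambda\in(0,1)$; thus $C'$ is positive on $(0,\tau^*)$ and negative on $(\tau^*,1)$, which is exactly unimodality with a global maximum at $\tau^*$.

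I do not anticipate a real obstacle, since the argument reduces to a derivative sign computation. The one step requiring care is the algebraic simplification, where one must verify that the apparent quadratic numerator collapses to the linear $N(\tau)$; it is precisely this cancellation that pins the transition between the two regimes at $\rho=1/2$. The only other point worth stating explicitly is that, in the $\rho>1/2$ case, ``unimodal with a global maximum'' follows rigorously from the fact that a linear function changes sign at most once, so neither a second-derivative test nor a convexity argument is needed.
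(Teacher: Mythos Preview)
Your proposal is correct and follows essentially the same approach as the paper: both compute the derivative of $C(\tau)$ and reduce the sign analysis to the linear expression $\mu-\lambda-\lambda\tau$ (equivalently, the paper's $\lambda(1+\tau)-\mu$), obtaining the same critical point $\tau^*=(\mu-\lambda)/\lambda$ and the same dichotomy at $\rho=1/2$. Your preliminary factoring-out of the positive constant $\lambda\mu/\bigl(2(\mu-\lambda)\bigr)$ is a cosmetic simplification, not a different method.
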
 

\begin{proof}
First, we compute the derivative of $C(\tau)$:
\begin{align}
\frac{d C}{d \tau}=\frac{\lambda \mu\left(\lambda\left(1+\tau\right)-\mu\right)}{2\left(\lambda-\mu\right)\left(\mu-\lambda\left(1-\tau\right)\right)^3}.
\end{align}
Since the denominator is negative for any $\tau$, the sign of the derivative is determined by the sign of $\lambda \left(1+\tau \right)-\mu$. If $\rho\le1/2$, then this expression is negative  for any $\tau\in (0,1)$ and the derivative  of $C(\tau)$ is positive for any $\tau\in (0,1)$. If $\rho>1/2$, then the derivative of $C(\tau)$ is positive for any  $\tau<(\mu-\lambda) / \lambda$; is  equal to zero at $\tau=(\mu-\lambda) / \lambda$; and negative otherwise. Thus, for any value of $\rho>1/2$, $C(\tau)$ is unimodal with a global maximum. 
\end{proof}

Next, we define:
\begin{align}\label{eq-1}
\underline{C}\triangleq C(1)=\frac{\lambda}{2\mu \left(\mu-\lambda\right)},
\end{align}
and 
\begin{align}
\overline{C}\triangleq\frac{\mu}{8\left(\lambda-\mu\right)^2}.
\end{align}
Note that if $\rho\le 0.5$, then $\underline{C}$ is the maximum value of $C(\tau)$ and if $\rho>0.5$, then $\overline{C}$ is the maximum value of $C(\tau)$. We can now state the main result of this section:
\begin{theorem}\label{theorem1}
The game has the following equilibrium structure.\\
When $\rho\le 1/2$:
\begin{itemize}
\item If $C<\underline{C}$, then there is a unique \textit{some-make-AR} equilibrium.
\item If $C>\underline{C}$, then there is a unique \textit{none-make-AR} equilibrium.
\end{itemize}
When $\rho>1/2$:
\begin{itemize}
\item If $C<\underline{C}$, then there is a unique \textit{some-make-AR} equilibrium.
\item If $\underline{C}<C<\overline{C}$, then there are two \textit{some-make-AR} equilibria and a \textit{none-make-AR} equilibrium.
\item If $C>\overline{C}$, then there is a unique \textit{none-make-AR} equilibrium.
\end{itemize}
\end{theorem}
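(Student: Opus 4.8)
The plan is to convert the entire statement into a counting problem for the preimages of the level $C$ under the map $C(\cdot)$ on $(0,1]$, and then read off the number and type of equilibria directly from the shape of $C(\cdot)$ established in the preceding lemma. Recall the two characterizations derived just before the theorem: a threshold $\tau^e\in(0,1)$ is a \textit{some-make-AR} equilibrium if and only if $C=C(\tau^e)$, and $\tau^e=1$ is a \textit{none-make-AR} equilibrium if and only if $C\ge \underline{C}=C(1)$. Hence the number of \textit{some-make-AR} equilibria equals the number of solutions of $C=C(\tau)$ with $\tau\in(0,1)$, and the \textit{none-make-AR} equilibrium is present precisely when $C\ge\underline{C}$.

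First I would record the boundary and extremal values of $C(\cdot)$: from \eqr{eq1} one reads $C(0)=0$ and $C(1)=\underline{C}$. By the preceding lemma, when $\rho\le 1/2$ the function is strictly increasing, so its maximum on $(0,1]$ is $\underline{C}$; when $\rho>1/2$ it is unimodal with interior maximizer $\tau^*=(\mu-\lambda)/\lambda$ and maximal value $C(\tau^*)$, which a direct substitution identifies with $\overline{C}$. Since in the latter case the graph rises from $0$ to $\overline{C}$ and then descends to $\underline{C}$, this also shows $\overline{C}>\underline{C}$, so the middle interval $(\underline{C},\overline{C})$ is nonempty.

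For $\rho\le 1/2$, strict monotonicity makes the count immediate. If $C\in(0,\underline{C})$, the intermediate value theorem gives exactly one $\tau^e\in(0,1)$ with $C=C(\tau^e)$, and since $C<\underline{C}$ no \textit{none-make-AR} equilibrium exists, yielding a unique \textit{some-make-AR} equilibrium; if $C>\underline{C}$, the level lies above the whole graph on $(0,1)$ so there is no \textit{some-make-AR} equilibrium, while $C\ge\underline{C}$ produces the unique \textit{none-make-AR} equilibrium. For $\rho>1/2$, I would split $(0,1)$ at $\tau^*$ into the increasing branch $(0,\tau^*)$, whose image is $(0,\overline{C})$, and the decreasing branch $(\tau^*,1)$, whose image is $(\underline{C},\overline{C})$. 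Counting crossings of the level $C$ against these two ranges then gives: one crossing (increasing branch only) when $C<\underline{C}$, hence a unique \textit{some-make-AR} equilibrium and no \textit{none-make-AR} equilibrium; two crossings (one per branch) when $\underline{C}<C<\overline{C}$, together with the \textit{none-make-AR} equilibrium that is present because $C>\underline{C}$, for three equilibria in total; and no crossing when $C>\overline{C}$, leaving only the \textit{none-make-AR} equilibrium.

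The main obstacle is not any individual crossing count but making the bookkeeping at the interface of the two equilibrium conditions watertight: verifying that $\tau^*$ lies strictly inside $(0,1)$ for every $\rho>1/2$ (so that both branches genuinely occur, using $\mu-\lambda<\lambda$), and confirming $\overline{C}>\underline{C}$ so that the middle regime truly hosts all three equilibria simultaneously. Once these two facts are pinned down, everything else reduces to the intermediate value theorem applied to the monotone pieces supplied by the preceding lemma.
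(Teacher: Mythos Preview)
Your proposal is correct and follows essentially the same approach as the paper: both reduce the equilibrium count to counting solutions of $C=C(\tau)$ on $(0,1)$ together with the test $C\ge\underline{C}$ for the \textit{none-make-AR} equilibrium, and then read off the answers from the monotone/unimodal shape of $C(\cdot)$ established in the preceding lemma. Your write-up is in fact more careful than the paper's, since you explicitly verify $\tau^*\in(0,1)$ and $\overline{C}>\underline{C}$, points the paper leaves implicit.
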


\begin{proof}
We begin with $\rho\le 0.5$. If $C<\underline{C}$, then there is a single value of $\tau$ such that $C=C(\tau)$ has a solution. Hence, there is one \textit{some-make-AR} equilibrium. A \textit{none-make-AR} equilibrium does not exist since $C(1)>C$. If $C>\underline{C}$, then there is no value of $\tau$ such that $C=C(\tau)$ has a solution. Hence, a \textit{some-make-AR} equilibrium does not exist. On the other hand, a \textit{none-make-AR} equilibrium exists since $C>C(1)$.

Next, consider $\rho>0.5$. In the range $[0,\underline{C}]$, the function $C(\tau)$ is monotonically increasing. Thus, if $C\in [0,\underline{C}]$, then there is a single value of $\tau$ such that $C=C(\tau)$ has a solution, and hence there is one \textit{some-make-AR} equilibrium. In the range $[\underline{C},\overline{C}]$, the function $C(\tau)$ is unimodal. Thus, if $C\in [\underline{C},\overline{C}]$, then there exist two values of $\tau$ that solve $C=C(\tau)$, and hence there are two \textit{some-make-AR} equilibria. The condition for the existence of  a \textit{none-make-AR} equilibrium is the same as in the case of $\rho\le 0.5$.
\end{proof}
%In Figure~\ref{fig_fee_func}, we plot $C(\tau)$ as defined in \eqr{eq1} for a queue with parameters $\lambda=18$ and $\mu=20$. Given the reservation cost $C$, the equilibrium structure is determined by the number of times $C$ intersects the function $C(\tau)$. 
%\begin{figure}[t]
 % \centering \includegraphics[scale=0.45]{feeFunc.ps}
  %\caption{$C(\tau)$ when $\rho>0.5$. The number and types of equilibria is determined by the value of $C$}
   % \label{fig_fee_func}
%\end{figure}

\section{Revenue Maximization}\label{sec:rev}
In this section, we assume that the reservation cost is a fee determined by the service provider. We show that the fee that maximizes the revenue leads to a unique equilibrium if the utilization is smaller than $2/3$ and to multiple equilibria if the utilization is greater than $2/3$.  We also show that the revenue from AR fee depends only on the utilization of the queue and the fee itself. Thus, if the demand and the number of servers increase proportionally, then the revenue from AR fees does not change.

The revenue per time unit, at equilibrium with threshold $\tau^e$, is the number of customers making AR multiplied by the AR fee that leads to that equilibrium. The expected revenue is
\begin{align}\label{eq__1}
R(\tau^e)\: & =\lambda(1-\tau^e)C(\tau^e)\nonumber \\ & = \frac{\lambda^2(1-\tau^e)\tau^e\mu}{2(\mu-\lambda)(\lambda(\tau^e-1)+\mu)^2}.
\end{align} 

With some manipulation, we get that the revenue function does not depend on the values of $\lambda$ and $\mu$ but only on the utilization $\rho$:
\begin{align}\label{eq2}
R(\tau^e)=\frac{\rho(1-\tau^e)\tau^e}{2(1-\rho)(1+\rho(\tau^e-1))^2}.
\end{align}
At first glance, this result seems surprising since it implies that the revenue does not increase when scaling the system (i.e., increasing both arrival and service rates). However, in an $M/D/1$ queue, the waiting time decreases  as the system gets larger, and hence customers are less motivated to make AR. Therefore, scaling the system has a trade off.  For a given threshold, as we scale the system, more customers will make AR but they will  pay a smaller fee.

By solving the equation $d R / d \tau^e =0$, we find that the optimal threshold is ${\tau^{opt}=(1-\rho )/(2-\rho)}$. By substituting $\tau^{opt}$ into \eqr{eq1}, we get that the optimal fee is
\begin{align} \label{eq_2}
C^*=\frac{\lambda( 2\mu-\lambda)}{8\mu(\mu-\lambda)^2}.
\end{align}
Similarly, by substituting $\tau^{opt}$ into \eqr{eq2}, we get that the maximum possible revenue is
\begin{align}
R^*=\frac{\rho^2}{8(1-\rho)^2}.
\end{align}

Next we find the number of equilibria  when $C=C^*$.
\begin{theorem}\label{corMD1}
The revenue maximizing fee $C^*$ leads to a unique \textit{some-make-AR} equilibrium if $\rho<2/3$ and to multiple equilibria, including a \textit{none-make-AR} equilibrium, otherwise.
\end{theorem}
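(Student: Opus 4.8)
The plan is to read off the number of equilibria directly from Theorem~\ref{theorem1}, whose classification of the equilibrium structure depends only on where the fee lies relative to the two critical values $\underline{C}$ and $\overline{C}$. Since $C^*=C(\tau^{opt})$ with $\tau^{opt}=(1-\rho)/(2-\rho)\in(0,1)$, the threshold $\tau^{opt}$ is automatically a \textit{some-make-AR} equilibrium; the real question is whether additional equilibria coexist with it. By Theorem~\ref{theorem1} this is governed by the sign of $C^*-\underline{C}$ (existence of a \textit{none-make-AR} equilibrium) and by whether $C^*$ falls below the peak $\overline{C}$ (existence of a second \textit{some-make-AR} equilibrium). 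So the entire proof reduces to locating $C^*$ between $\underline{C}$ and $\overline{C}$ as a function of $\rho$.

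First I would compare $C^*$ with $\underline{C}$. Forming the ratio $C^*/\underline{C}$ from \eqr{eq_2} and \eqr{eq-1} and cancelling common factors, it simplifies to $(2-\rho)/\bigl(4(1-\rho)\bigr)$. A one-line manipulation shows this ratio is less than, equal to, or greater than $1$ precisely when $3\rho$ is less than, equal to, or greater than $2$. Hence $C^*<\underline{C}$ iff $\rho<2/3$, with equality at $\rho=2/3$ and $C^*>\underline{C}$ iff $\rho>2/3$, which is exactly the threshold announced in the statement.

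Next I would verify that $C^*$ never exceeds the peak $\overline{C}$, so that we never land in the ``unique \textit{none-make-AR}'' regime of Theorem~\ref{theorem1}. Forming $C^*/\overline{C}$ yields the clean expression $\rho(2-\rho)=1-(1-\rho)^2$, which is strictly less than $1$ for every $\rho\in(0,1)$. Thus $C^*<\overline{C}$ always. This is the key non-obvious ingredient: were it to fail, a revenue-maximizing fee could sit above $\overline{C}$ and produce a single (\textit{none-make-AR}) equilibrium even at high load, breaking the claim.

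Combining the two comparisons with Theorem~\ref{theorem1} finishes the argument. When $\rho<2/3$ we have $C^*<\underline{C}$, so (whether $\rho\le 1/2$ or $1/2<\rho<2/3$) the fee lies in the regime with a unique \textit{some-make-AR} equilibrium, namely $\tau^{opt}$. When $\rho>2/3$ we have $\underline{C}<C^*<\overline{C}$, placing the fee in the regime with two \textit{some-make-AR} equilibria and one \textit{none-make-AR} equilibrium. The only point needing separate care is the boundary $\rho=2/3$, where $C^*=\underline{C}$ is not covered by the strict inequalities of Theorem~\ref{theorem1}: there the \textit{none-make-AR} condition $C^*\ge C(1)$ holds with equality so that equilibrium exists, while $\tau^{opt}$ persists as a \textit{some-make-AR} equilibrium, giving multiple equilibria as required. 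The main (and essentially only) obstacle is the arithmetic bookkeeping in the two ratio computations; all the conceptual work is carried by Theorem~\ref{theorem1}.
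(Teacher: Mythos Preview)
Your proposal is correct and follows the same route as the paper: invoke Theorem~\ref{theorem1} and compare $C^*$ to $\underline{C}$ to obtain the $\rho=2/3$ threshold. Your version is in fact more complete than the paper's, which omits both the explicit check $C^*<\overline{C}$ (implicit there since $C^*=C(\tau^{opt})\le\max_\tau C(\tau)=\overline{C}$ when $\rho>1/2$) and the boundary case $\rho=2/3$.
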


\begin{proof}
	The optimal reservation cost $C^*$ leads to multiple equilibria only if $\rho > 0.5$ and $C^*>\underline C$ (see Theorem \ref{theorem1}). Using  \eqr{eq-1} and \eqr{eq_2}, we deduce that if $C^*>\underline C$ , then
\begin{align}
\frac{\lambda( 2\mu-\lambda)}{8\mu(\mu-\lambda)^2}>\frac{\lambda}{2\mu \left(\mu-\lambda\right)}.
\end{align}
One can show that the inequality above holds only if $\rho>2/3$. 
\end{proof}
Figure \ref{fig_rev} illustrates the game outcome when $C=C^*$.

\begin{figure} [t]
\centering
  \begin{subfigure}[b]{1.2\textwidth}
\centering
    \includegraphics[scale=0.4]{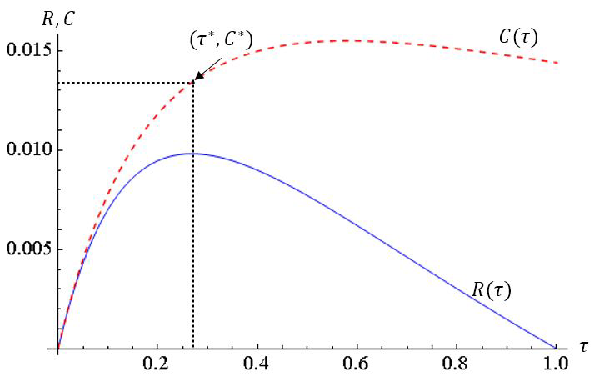}
    \subcaption{$\lambda=38$, $\mu=60$}
    \label{fig1}
  \end{subfigure}
  \begin{subfigure}[b]{0.47\textwidth}
 \centering
    \includegraphics[scale=0.4]{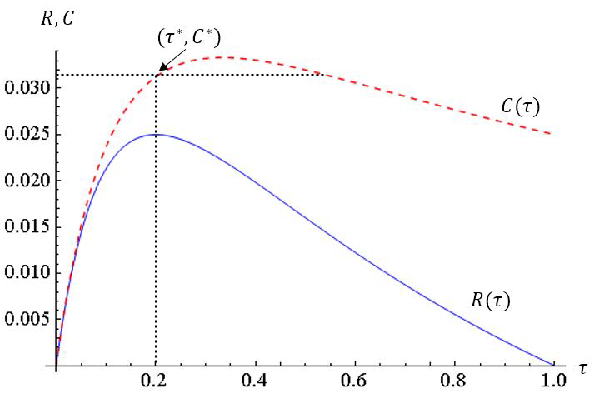}
    \subcaption{$\lambda=45$, $\mu=60$}
    \label{fig2}
  \end{subfigure}\\
  \caption{When the utilization $\rho<2/3$, the optimal fee $C^*$ leads to a unique equilibrium (a). When $\rho>2/3$, $C^*$ leads to multiple equilibria (b).}
  \label{fig_rev}
  \end{figure}

\subsection{Price of Conservatism}\label{poc}
Assuming that $\rho>2/3$, the provider can either be risk-averse and charge a fee that leads to a unique equilibrium with guaranteed revenue, or it can be risk-taking and charge a higher fee that may lead to greater revenue but also to zero revenue.
To compare between the two options, we use the Price of Conservatism (PoC) metric, which was introduced in \cite{simhon2017advance}. PoC is the ratio between the maximum possible revenue $R^*$ and the maximum guaranteed revenue $R_g^*$, which is defined as follows.
\begin{align}
R_g^*=  \quad &\; \sup_{0<\tau^e <1} R(\tau^e).
\nonumber \\ & \mbox{s.t. } C(\tau^e) < \underline{C}.
\end{align}

Since $R(\tau^e)$ has exactly one extreme point (which is $\tau^{opt}$), it is increasing in the range $[0,\tau^{opt})$. Therefore,  the maximum guaranteed revenue is achieved when choosing the largest $\tau^e$ for which   $C(\tau^e) < \underline{C}$. In other words,  $C$ should  be slightly smaller than $\underline{C}$. By solving $C(\tau)=\underline{C}$, we get two solutions: $\tau^{e1}=1$ and 
\begin{align}\label{eq__2}
\tau^e_2=\left(\frac{1-\rho}{\rho}\right)^2.
\end{align}
%Any fee smaller than $C(\tau_2)$ has a unique some-make-AR equilibrium. We derive $R(\tau)$ and plug in $\tau=\tau_2$ 
%\begin{align}
%\frac{d R}{d \tau}(\frac{(\lambda-\mu)^2}{\lambda^2})=\frac{\lambda^3(2\mu-3\lambda)}{2(\lambda-\mu)^3 \mu}.
%\end{align}
%
%Since for any $\rho>2/3$ the expression above is positive, we  infer that $R(\tau)$ is increasing at $\tau=\tau_2$. Since $R(\tau)$ has only one maximum point we can infer that it is increasing at the range$[0,\tau_2]$. 
By substituting $\tau^e_2$ into \eqr{eq2}, we get
\begin{align}
R_g^*=\frac{2\rho-1}{2(1-\rho)},
\end{align}
and by dividing  $R^*$ by $R_g^*$, we get
\begin{align}
PoC=\frac{\rho^2}{-8\rho^2+12\rho-4}.
\end{align}
We conclude with the following theorem.
\begin{theorem}
If $\rho<2/3$, then $PoC=1$. Else, $PoC=\frac{\rho^2}{-8\rho^2+12\rho-4}$. 
%PoC is equal to:
%\begin{itemize}
%\item$1$ if $\rho<2/3$.
%\item $\frac{\rho^2}{-8\rho^2+12\rho-4}$ if $\rho>2/3$.
%\end{itemize}
\end{theorem}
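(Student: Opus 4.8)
The plan is to reduce everything to the comparison between the revenue-maximizing fee $C^*$ and the threshold $\underline{C}$ that separates the unique-equilibrium regime from the multiple-equilibria regime. Recall that $R^*$ (attained at $\tau^{opt}$) is the \emph{unconstrained} maximum of $R$, whereas $R_g^*$ is the maximum of $R$ \emph{subject to} $C(\tau^e)<\underline{C}$, the constraint that guarantees a unique some-make-AR equilibrium and hence a nonzero revenue. Since $PoC = R^*/R_g^*$, the whole statement turns on whether the unconstrained optimizer $\tau^{opt}$ is feasible for the constrained problem. From the proof of Theorem~\ref{corMD1} we already know that $C^*>\underline{C}$ if and only if $\rho>2/3$, and I would use exactly this dichotomy to split the argument.

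First, for $\rho<2/3$, I would observe that $C(\tau^{opt})=C^*<\underline{C}$, so $\tau^{opt}$ satisfies the constraint defining $R_g^*$. The constraint is therefore inactive, and the constrained and unconstrained maxima coincide: $R_g^*=R(\tau^{opt})=R^*$. Dividing gives $PoC=1$ immediately, with no further computation.

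Second, for $\rho>2/3$, we have $C^*>\underline{C}$, so $\tau^{opt}$ is infeasible and the constraint binds. Here I would invoke the unimodality of $C(\tau)$ (the lemma showing that, for $\rho>1/2$, $C$ is unimodal with an interior peak at $(\mu-\lambda)/\lambda=(1-\rho)/\rho$) together with $C(1)=\underline{C}$ to characterize the feasible set. Solving $C(\tau)=\underline{C}$ yields the two roots $\tau=1$ and $\tau^e_2=((1-\rho)/\rho)^2$ (see \eqr{eq__2}); since $\tau^e_2<(1-\rho)/\rho$, it lies on the increasing branch, so $\{\tau^e\in(0,1):C(\tau^e)<\underline{C}\}=(0,\tau^e_2)$. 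Because $R$ is increasing on $[0,\tau^{opt})$ and, by the same monotonicity of $C$ on the increasing branch, $\tau^e_2<\tau^{opt}$, the supremum in the definition of $R_g^*$ is attained at the right endpoint $\tau^e_2$. Substituting $\tau^e_2$ into \eqr{eq2} gives $R_g^*=(2\rho-1)/(2(1-\rho))$, and dividing $R^*=\rho^2/(8(1-\rho)^2)$ by this and collecting terms yields $PoC=\rho^2/(-8\rho^2+12\rho-4)$.

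I expect the only genuinely delicate point to be the characterization of the feasible region and the verification that $R_g^*$ is achieved at the boundary $\tau^e_2$ rather than at an interior critical point of $R$: this requires both that $\tau^e_2$ sits on the increasing branch of the unimodal $C$ (so that the feasible thresholds form a single interval starting at $0$) and that $\tau^e_2<\tau^{opt}$ (so that $R$ is still increasing up to the boundary). Both follow from $\tau^e_2<(1-\rho)/\rho$ and the monotonicity of $C$ on $(0,(1-\rho)/\rho)$. The remaining work — substituting $\tau^e_2$ into the revenue formula and simplifying the ratio into $-8\rho^2+12\rho-4$ — is routine algebra.
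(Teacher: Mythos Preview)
Your proposal is correct and follows essentially the same route as the paper: split on whether $C^*\lessgtr\underline{C}$ via Theorem~\ref{corMD1}, and in the binding case solve $C(\tau)=\underline{C}$ to get $\tau^e_2=((1-\rho)/\rho)^2$, substitute into \eqr{eq2}, and divide. You are in fact slightly more careful than the paper in justifying that the feasible set is exactly $(0,\tau^e_2)$ and that $\tau^e_2<\tau^{opt}$, points the paper leaves implicit.
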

Figure \ref{fig_ex_22} shows the maximum possible revenue and the maximum guaranteed revenue in a system with parameters $\lambda=45$ and $\mu=60$. 
%It can be seen in the figure that for those parameters, the difference between the two values is quite small. 
%Thus, being risk averse seems preferable in this case.
\begin{figure}[t]
  \centering \includegraphics[scale=1.2]{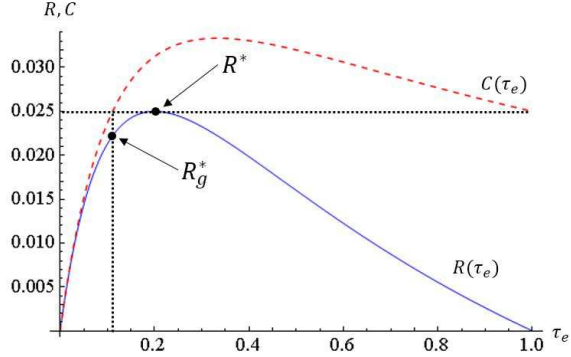}
  \caption{The maximum possible revenue and the maximum guaranteed revenue in a system with parameters $\lambda=45$ and $\mu=60$.}
    \label{fig_ex_22}
\end{figure}

By computing the derivative of PoC with respect to $\rho$, we get that for any $\rho>2/3$
\begin{align}
\frac{d PoC}{d \rho}=\frac{\rho(3\rho-2)}{4(2\rho^2-3\rho+1)^2}>0
\end{align}
Thus, we obtain the following corollary:
\begin{corollary}
The price of conservatism increases with the utilization.
\end{corollary}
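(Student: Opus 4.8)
The plan is to treat the price of conservatism as a function of $\rho\in(0,1)$ (recall $\rho<1$ since $\lambda<\mu$) and show it is non-decreasing, with the only interesting behavior occurring in the high-utilization regime. By the preceding theorem, $PoC$ is given piecewise: it equals the constant $1$ on $(0,2/3]$ and equals the rational function $\rho^2/(-8\rho^2+12\rho-4)$ on $(2/3,1)$. On the first piece there is nothing to prove, since a constant function is trivially non-decreasing. Hence the entire argument reduces to establishing that the rational piece is (strictly) increasing on $(2/3,1)$ and that the two pieces match up at the junction $\rho=2/3$.

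For the rational piece, I would differentiate using the quotient rule; this produces exactly the expression already displayed before the statement, namely $d\,PoC/d\rho=\rho(3\rho-2)\big/\big(4(2\rho^2-3\rho+1)^2\big)$. The substantive step is then the sign analysis. The numerator $\rho(3\rho-2)$ is a product of two factors, both strictly positive whenever $\rho>2/3$, so the numerator is positive on the whole interval $(2/3,1)$. For the denominator, the key observation is the factorization $2\rho^2-3\rho+1=(2\rho-1)(\rho-1)$, whose only roots are $\rho=1/2$ and $\rho=1$; since neither lies in the open interval $(2/3,1)$, the quadratic is nonzero there and its square is strictly positive, so the denominator $4(2\rho^2-3\rho+1)^2$ is strictly positive. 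Therefore the derivative is strictly positive on $(2/3,1)$, and $PoC$ is strictly increasing there.

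Finally, I would verify continuity at the junction to stitch the two regimes into a single monotone claim: substituting $\rho=2/3$ into the rational expression gives numerator $4/9$ and denominator $-8(4/9)+12(2/3)-4=4/9$, so the rational piece also equals $1$ at $\rho=2/3$, matching the constant piece. Combining the flat behavior on $(0,2/3]$ with strict monotonicity on $(2/3,1)$ yields that $PoC$ is non-decreasing in $\rho$ overall (and strictly increasing once $\rho$ exceeds $2/3$), which is the assertion of the corollary. I do not expect a genuine obstacle here: the computation is routine calculus, and the only points requiring care are confirming that the denominator does not vanish on the relevant interval (handled by the factorization above) and checking the boundary value so that the piecewise definition produces a single coherent monotonicity statement rather than a spurious jump at $\rho=2/3$.
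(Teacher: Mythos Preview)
Your proposal is correct and follows essentially the same approach as the paper, which simply computes the derivative $d\,PoC/d\rho=\rho(3\rho-2)/\big(4(2\rho^2-3\rho+1)^2\big)$ and observes it is positive for $\rho>2/3$. Your treatment is in fact more careful than the paper's, since you also justify the sign of the denominator via the factorization and verify continuity at $\rho=2/3$ to glue the two regimes, but the core argument is identical.
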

\noindent That is, as the utilization increases, the ratio between the potential revenue when the provider is risk-taking and the revenue when the provider is risk-averse increases and tends to $\infty$ as $\rho \to 1$.

\section{Dynamic Games}\label{sec_learning}

\subsection{ Learning Models}
In this section, we study dynamic versions of the game. In dynamic games (also known as learning models, since players learn over time the behavior of other players), it is assumed that the game repeats many times and that initially players do not necessarily follow an equilibrium strategy. The goal is to find the long-term behavior of the customers. In our analysis, we use a \textit{best response dynamic} model which is rooted in Cournot study of duopoly \cite{cournot1897recherches}.  we next describe the learning models.

At each step (game), a new set of customers participate (or the same set of participants but with new realizations of request times). At the first step, all customers have an \textit{initial belief} about the strategy that is followed by all customers. Next, we assume:
\begin{assumption} \label{ass_1}
Customers that are indifferent between actions $AR$ and $AR'$ choose action AR'.
\end{assumption}
\noindent Based on this assumption, and using the proof of Lemma \ref{lemma_111}, one can show that the best response of all customers to any initial belief  is a threshold strategy. In order to simplify the analysis and since a threshold strategy is followed at all steps, we also assume:
\begin{assumption}
The initial belief is a threshold strategy. 
\end{assumption}

We denote by $\tau_i\in[0,1]$ the threshold of the strategy followed at step $i\ge 1$. We denote by $\hat{\tau}_i$ the estimation of this strategy and we distinguish between two types of learning: 
\begin{enumerate}
\item \textbf{Strategy learning}. In this type of learning, the analysis assumes that at each step $i$, $\hat{\tau}_i \triangleq\tau_i$. That is, customers observe past strategies.
\item \textbf{Action learning}. In this type of learning, customers observe previous actions and use the proportion of customers that chose $AR$ at the previous step as an estimation of the strategy that was followed at that step. Namely, if the demand and the number of reservations at step $i$ are $d_i$ and  $d^{AR}_i$ respectively, then
\begin{align}\label{eq_700}
\hat{\tau}_i\triangleq1-\frac{d^{AR}_i}{d_i}.
\end{align}
\end{enumerate}

Since the best response of all customers to any belief is a threshold strategy, we can define a joint best response function $BR: [0,1] \rightarrow [0,1]$. The input is a belief about the threshold strategy that will be followed by all customers. The output is the best response threshold to that belief. Thus, we can describe the best response dynamics of the game as the following process:
\begin{align} \label{eq_40}
\; & \hat{\tau}_1=\beta, 
 \\ & \tau_i=BR\left( \hat{\tau}_{i-1}\right),\quad \forall i>1,
\end{align}
where $\beta\in[0,1]$ represents the initial belief. Note that under \textit{strategy-learning} this process is deterministic, while under \textit{action-learning} this process is a Markov process \cite[Chapter 3]{gardiner1985handbook}. In the following sections we analyze this dynamic process.

Next, we focus on the behavior of customers at a given step. Thus, we remove the subscript $i$. We begin the analysis with the following observations: 
\begin{enumerate}[(i)]
\item Given a belief $\beta$ (i.e., assuming that all other customers follow the threshold $\beta$), if a tagged customer with potential priority $p>\beta$ chooses $AR$, then all customers with greater potential priority have higher priority and all customers with smaller potential priority have lower priority. Therefore, the (believed) expected waiting time of the tagged customer is equal to the expected waiting time of a threshold customer that chooses $AR$ in a system where all customers follow the threshold $p$. Hence, 
\begin{align}\label{eq_121}
W(\beta,p) =W_{AR}(p), \quad \mbox{if } p \ge \beta,
\end{align}
where $W_{AR}(\cdot)$ is defined in \eqr{eq-8}.
\item Given a belief $\beta$, if a tagged customer with potential priority $p<\beta$ chooses $AR$, then his/her (believed) expected waiting time is the same as the expected waiting time of the threshold customer (recall that each customer believes that he/she is the only one deviating). Hence,
\begin{align}\label{eq_122}
W(\beta,p) =W_{AR}(\beta), \quad \mbox{if } p < \beta.
\end{align}
\item The expected waiting time of all customers that choose $AR'$ are equal. Hence, 
 \begin{equation} \label{eq_12}
 W(\beta,0)= W_{AR'}(\beta),\quad \forall p\in[0,1],
 \end{equation}
 where $W_{AR'}(\cdot)$ is defined in \eqr{eq-9}.
\end{enumerate}
\noindent Those properties will be used later to prove our main results. Next, we separately explore the case of a unique \textit{some-make-AR} equilibriun and the case of multiple equilibria.
 
\subsection{Learning with Unique Some-make-AR Equilibrium}
Consider a \textit{some-make-AR} equilibrium with equilibrium threshold $\tau^e$. By computing the derivative of $W_{AR}(\beta)$ and $W_{AR'}(\beta)$, one can verify that both functions are decreasing with $\beta$. This property will be used in the proof of the following lemma.
\begin{lemma} \label{lemma_2}
Under a unique some-make-AR  equilibrium:
\begin{enumerate}
\item  If a belief $\beta\in [0,\tau^e)$ , then  $BR(\beta) \in (\beta,\tau^e)$. 
\item  If a belief $\beta\in (\tau^e,1]$, then  $BR(\beta)=0$.
\end{enumerate}
\end{lemma}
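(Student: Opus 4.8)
The plan is to compute the best response $BR(\beta)$ directly from the decision rule that a tagged customer reserves iff $W(\beta,p)+C<W(\beta,0)$ (strict, since by Assumption~\ref{ass_1} indifferent customers pick $AR'$), feeding in the three observations \eqr{eq_121}, \eqr{eq_122}, \eqr{eq_12} together with two facts already available: that $W_{AR}(\cdot)$ and $W_{AR'}(\cdot)$ are strictly decreasing, and that $C(\tau)=W_{AR'}(\tau)-W_{AR}(\tau)$ by \eqr{eq-12} and \eqr{eq1}. The first preliminary step is to translate ``a unique \emph{some-make-AR} equilibrium'' (which by Theorem~\ref{theorem1} means $C<\underline{C}$) into a sign statement about $C(\tau)-C$: I would show that in this regime $\tau^e$ is the only root of $C(\tau)=C$ in $(0,1)$, with $C(\tau)<C$ for $\tau\in[0,\tau^e)$ and $C(\tau)>C$ for $\tau\in(\tau^e,1]$. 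For $\rho\le 1/2$ this is immediate from monotonicity of $C(\tau)$; for $\rho>1/2$ it uses that $C<\underline{C}$ places $\tau^e$ on the increasing branch while the decreasing branch stays above $\underline{C}=C(1)>C$.

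For Part~1 ($\beta\in[0,\tau^e)$) I would first dispose of the low-priority customers: a customer with $p<\beta$ reserves iff $W_{AR}(\beta)+C<W_{AR'}(\beta)$, i.e.\ iff $C<C(\beta)$, which fails because $C(\beta)<C(\tau^e)=C$; hence nobody below $\beta$ reserves and $BR(\beta)\ge\beta$. Among customers with $p\ge\beta$ the rule is $W_{AR}(p)+C<W_{AR'}(\beta)$, so by monotonicity of $W_{AR}$ the best-response threshold $p^{*}=BR(\beta)$ is the unique solution of $W_{AR}(p^{*})=W_{AR'}(\beta)-C$. Then $p^{*}>\beta$ follows from $W_{AR}(\beta)>W_{AR'}(\beta)-C$ (again $C>C(\beta)$), and $p^{*}<\tau^e$ follows by comparing $W_{AR}(p^{*})=W_{AR'}(\beta)-C$ with the equilibrium identity $W_{AR}(\tau^e)=W_{AR'}(\tau^e)-C$: since $\beta<\tau^e$ gives $W_{AR'}(\beta)>W_{AR'}(\tau^e)$, we obtain $W_{AR}(p^{*})>W_{AR}(\tau^e)$ and hence $p^{*}<\tau^e$.

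For Part~2 ($\beta\in(\tau^e,1]$) the sign statement now yields $C(\beta)>C$, so \emph{every} low-priority customer ($p<\beta$) strictly prefers $AR$. I would then show the high-priority customers ($p\ge\beta$) reserve as well: since $W_{AR}(p)\le W_{AR}(\beta)$ for $p\ge\beta$, their reservation cost satisfies $W_{AR}(p)+C\le W_{AR}(\beta)+C<W_{AR'}(\beta)$, so they too choose $AR$. Thus all customers reserve, and the only threshold consistent with that is $BR(\beta)=0$.

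I expect the main obstacle to be the preliminary sign step in the $\rho>1/2$ case, where $C(\tau)$ is merely unimodal: one must verify that $C(\tau)>C$ holds on the \emph{entire} interval $(\tau^e,1]$, including the decreasing branch, not just near $\tau^e$. This is exactly where the hypothesis $C<\underline{C}$ is essential, since it forces $\tau^e$ to sit before the peak and guarantees the decreasing branch never drops below $\underline{C}=C(1)$. A secondary point to handle cleanly is the well-definedness of $p^{*}$ (existence of a root in $[\beta,1]$, which in Part~1 follows for free from the two inequalities and the intermediate value theorem) and the boundary cases $p=\beta$ and $\beta=1$, both resolved by Assumption~\ref{ass_1} and the fact that the potential priority is a.s.\ non-atomic.
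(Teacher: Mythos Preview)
Your proposal is correct and follows essentially the same route as the paper: both arguments rest on the preliminary sign statement $C(\beta)<C$ for $\beta<\tau^e$ and $C(\beta)>C$ for $\beta>\tau^e$ (equivalently, the sign of $W_{AR}(\beta)+C-W_{AR'}(\beta)$), then use the monotonicity of $W_{AR}$ together with \eqr{eq_121}--\eqr{eq_12} to locate the best-response threshold in Part~1 via the intermediate value theorem and to show everyone reserves in Part~2. The only notable difference is that you derive the sign statement more carefully, explicitly invoking Theorem~\ref{theorem1} (i.e., $C<\underline{C}$) to handle the unimodal case $\rho>1/2$, whereas the paper simply asserts a single crossing from $W_{AR}(0)+C>W_{AR'}(0)$; your treatment is the more rigorous of the two, but the substance is identical.
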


\begin{proof}
From \eqr{eq-8} and \eqr{eq-9}, we deduce that $W_{AR}(0)+C>W_{AR'}(0)$. Since, under unique \textit{some-make-AR} equilibrium, $W_{AR}(0)+C$ and $W_{AR'}(0)$ intersect once, we conclude that
\begin{align} \label{eq_500}
W_{AR}(\beta)+C>W_{AR'}(\beta) \quad \forall \beta\in [0,\tau^e)
\end{align}
and
\begin{align} \label{eq_505}
W_{AR}(\beta)+C<W_{AR'}(\beta) \quad \forall \beta\in (\tau^e,1).
\end{align}
For proving part 1, assume that $\beta<\tau^e$.  Since $W_{AR}(\tau^e)+C=W_{AR'}(\tau^e)$, and since $W_{AR}(\cdot)$ is a decreasing function we deduce that
\begin{align} \label{eq_501}
W_{AR}(\tau^e)+C<W_{AR'}(\beta).
\end{align}

From \eqr{eq_500} and \eqr{eq_501} we deduce that there exists a $\tau\in(\beta,\tau^e)$ such that 
\begin{align} \label{eq_502}
W_{AR}(\tau)+C=W_{AR'}(\beta).
\end{align}

Given the equation above  and using \eqr{eq_121} and \eqr{eq_122}, one can see that all customers with potential priority $p>\tau$ choose $AR$, while all customers with potential priority $p\le \tau$ choose $AR'$. This complete the proof of the first part of the lemma.

Now, let assume that $\beta >\tau^e$. Based on \eqr{eq_121} and \eqr{eq_122} and since $W_{AR}(\cdot)$ is a decreasing function, we deduce that
\begin{align}\label{eq-66}
W(\beta, p)\le W_{AR}(\beta),\quad \forall p \in [0,1].
\end{align}
From Eq.~\eqref{eq_505} and \eqr{eq-66}, we deduce that
\begin{align}
W(\beta, p)+C<W_{AR'}(\beta),\quad \forall p \in [0,1].
\end{align}
That is, all customers are better off choosing $AR$ and the best response to $\beta$ is $\tau=0$.  
\end{proof}

Next, we study the long-term outcome of the dynamic game and establish the following result.
\begin{theorem}\label{theorem_5}
A game with unique some-make-AR equilibrium converges to equilibrium under \textit{strategy-learning} and cycles under \textit{action-learning}.
\end{theorem}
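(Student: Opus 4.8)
The plan is to treat the two learning modes separately, using Lemma~\ref{lemma_2} as the main tool in each. For \emph{strategy-learning} the estimate equals the true threshold, so the dynamics in~\eqref{eq_40} collapse to the deterministic recursion $\tau_{i+1}=BR(\tau_i)$. I would first observe that $BR$ maps $[0,1]$ into $[0,\tau^e]$: part~2 of Lemma~\ref{lemma_2} sends any belief above $\tau^e$ to $0$ in a single step, so from $i=2$ on the iterate lies in $[0,\tau^e]$. On that interval part~1 of the lemma gives $\beta<BR(\beta)<\tau^e$ for every $\beta<\tau^e$, so $\{\tau_i\}$ is strictly increasing and bounded above by $\tau^e$. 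Monotone convergence then yields a limit $L\le\tau^e$, and since $BR$ is continuous on $[0,\tau^e]$ (the best-response threshold solves $W_{AR}(\tau)+C=W_{AR'}(\beta)$ with $W_{AR}$ strictly monotone) and has no fixed point strictly below $\tau^e$, the limit must be $L=\tau^e$. Hence strategy-learning converges to the equilibrium.

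For \emph{action-learning} the recursion becomes the Markov process $\tau_{i+1}=BR(\hat\tau_i)$ where, by~\eqref{eq_700}, $\hat\tau_i=1-d^{AR}_i/d_i$ fluctuates around the played threshold $\tau_i$: since each potential priority is uniform on $[0,1]$, the count $d^{AR}_i$ is $\mathrm{Binomial}(d_i,1-\tau_i)$ given the demand $d_i$, so $\hat\tau_i$ has mean $\tau_i$ but nonvanishing spread. The structural fact I would exploit is the discontinuity of $BR$ at $\tau^e$ displayed by Lemma~\ref{lemma_2}: $BR(\beta)\to\tau^e$ as $\beta\uparrow\tau^e$, yet $BR(\beta)=0$ for all $\beta>\tau^e$. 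I would prove cycling by ruling out convergence to every candidate limit. Because $BR$ maps into $[0,\tau^e]$, we have $\tau_i\le\tau^e$ for $i\ge 2$, so no limit exceeds $\tau^e$. A putative limit $L\in(0,\tau^e]$ is excluded because, whenever $\tau_i$ stays near such an $L$, the crash event $\{\hat\tau_i>\tau^e\}=\{d^{AR}_i<d_i(1-\tau^e)\}$ has conditional probability bounded away from zero — near $L=\tau^e$ because the binomial proportion falls below its own mean with probability roughly $\tfrac12$, and for $L\in(0,\tau^e)$ because small-demand realizations (e.g.\ a single arrival choosing $AR'$, giving $\hat\tau_i=1$) already suffice — so by a conditional Borel--Cantelli argument $\hat\tau_i>\tau^e$ occurs infinitely often, forcing $\tau_{i+1}=BR(\hat\tau_i)=0$ infinitely often and contradicting $\tau_i\to L>0$. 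The remaining candidate $L=0$ is excluded because when $\tau_i$ is small the estimate concentrates at $\hat\tau_i=0$ (all customers reserve), whence $\tau_{i+1}=BR(0)>0$ is a fixed positive value attained infinitely often. With no limit possible, the iterate instead repeatedly climbs toward $\tau^e$ and is knocked back to $0$; that is, it cycles.

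The step I expect to be the main obstacle is the recurrence claim underlying the action-learning case: upgrading ``$\hat\tau_i$ overshoots $\tau^e$ with positive probability'' to ``it overshoots infinitely often, almost surely.'' This requires the binomial fluctuations not to die out along the trajectory, i.e.\ that the per-step demand $d_i$ remain stochastically bounded (the natural setting where each step is an independent replay of the same game with identical parameters) rather than growing without limit — if $d_i\to\infty$ the law of large numbers would force $\hat\tau_i\to\tau_i$ and action-learning would mimic strategy-learning and \emph{converge}, so this hypothesis is essential. Combined with a conditional (L\'evy) form of the second Borel--Cantelli lemma, the uniform positive lower bound on the crash probability then yields almost-sure recurrence of the crashes. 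A secondary subtlety is the treatment of $L=0$, where one must verify that the estimate genuinely concentrates at $0$ for small $\tau_i$ so that the one-step jump to $BR(0)>0$ indeed prevents settling at the no-reservation state.
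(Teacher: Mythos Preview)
Your proposal is correct and, for the \emph{strategy-learning} half, essentially identical to the paper's argument: reduce to $[0,\tau^e]$ after one step, then apply monotone convergence on $\{\tau_i\}$ via part~1 of Lemma~\ref{lemma_2} and identify the limit as the unique fixed point.

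For the \emph{action-learning} half your route is genuinely more careful than the paper's. The paper simply observes that whenever $\tau_i>0$ the event $\{\hat\tau_i>\tau^e\}$ has positive probability and that this event forces $\tau_{i+1}=0$, concluding in one sentence that ``customers' strategy cycles between $0$ and $\tau^e$.'' You instead argue non-convergence directly by eliminating every candidate limit, invoking a conditional Borel--Cantelli lemma to upgrade ``positive probability of a crash'' to ``infinitely many crashes a.s.,'' and treating the boundary $L=0$ separately via the deterministic jump $BR(0)>0$. What your approach buys is an actual proof: the paper's one-liner leaves unaddressed precisely the recurrence issue you flag as the main obstacle, and never rules out convergence to some $L\in(0,\tau^e)$. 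Your explicit identification of the hidden hypothesis that per-step demand be stochastically bounded (so that sampling noise does not vanish) is also a point the paper omits entirely; as you note, without it action-learning would degenerate to strategy-learning and the theorem would fail.
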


\begin{proof}
We begin with the first part of the theorem. Let assume that the initial belief $\beta\in [0,\tau^e)$. From Lemma \ref{lemma_2}, we deduce that $\tau_1 \in(\beta,\tau^e)$. Hence, $\tau_2 \in (\tau_1,\tau^e)$. By induction, we deduce that, for any $i\ge 0$,
\begin{flalign}
&\; \tau_i \ge \tau_{i-1}, \\ &
\tau_i \le \tau^e.
\end{flalign}
The set $\lbrace \tau_i, i=1,2...\rbrace$ is a monotonically increasing sequence bounded by $\tau^e$. Thus, it has a limit, denoted by $L$. since $\lim_{i\to \infty} \tau_i \rightarrow L$, and sense $\lim_{i\to \infty} BR(\tau_i)=L$, we conclude that the limit $L$ is a fixed point of $BR$, and hence it must be the equilibrium point $\tau^e$.

Next, we assume that $\beta \in (\tau^e,1]$. In this case, based on Lemma \ref{lemma_3}, $\tau_1=0$ and the game converges to equilibrium as in the case of $\beta\in [0,\tau^e)$.

From \eqr{eq_700}, we deduce that, under \textit{action-learning}, at any step $i$, if $\tau_i>0$, then $P(\hat \tau_i>\tau^e)>0$ (i.e., if the strategy followed at step $i$ is greater than zero then there is a positive probability that the fraction of customers not making AR will be greater than $\tau^e$). Once $\hat \tau_i>\tau^e$, then $\tau_{i+1}=0$. Thus, customers strategy cycles between $0$ and $\tau^e$.  
\end{proof}

Next, we determine, for a given reservation cost, whether a service provider who wishes to maximize the number of reservations is better off under \textit{strategy-learning} or under \textit{action-learning}.

\begin{theorem}\label{theorem_33}
In a dynamic game with a unique \textit{some-make-AR} equilibrium, the average number of customers making AR under \textit{action-learning} is greater than under \textit{strategy-learning}.
\end{theorem}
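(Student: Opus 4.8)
The plan is to compare the long-run time-average fraction of customers who choose $AR$ under the two learning rules and to show it is strictly larger under action-learning. Since a customer reserves exactly when the potential priority $p$ exceeds the prevailing threshold $\tau_i$, and $p$ is uniform on $[0,1]$, the expected fraction of reservations at step $i$ is $1-\tau_i$. As the expected demand per step is the same under both rules, it suffices to compare $\lim_{n\to\infty}\frac{1}{n}\sum_{i=1}^{n}(1-\tau_i)$ for the two threshold sequences, i.e.\ to compare the long-run average thresholds.

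First I would settle the strategy-learning side, which is immediate from the convergence already proved. By Theorem \ref{theorem_5}, the deterministic sequence $\tau_i$ converges monotonically to $\tau^e$ from any initial belief, so by Ces\`aro convergence its running average tends to the same limit. Hence the long-run average reservation fraction under strategy-learning equals exactly $1-\tau^e$.

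Next I would treat action-learning, where $(\tau_i)$ is the Markov chain $\tau_i=BR(\hat\tau_{i-1})$ with $\hat\tau_{i-1}$ the noisy estimate \eqref{eq_700}. The crucial structural fact, supplied by Lemma \ref{lemma_2}, is that $BR$ maps every belief into $[0,\tau^e)$: a belief below $\tau^e$ is sent strictly below $\tau^e$, while a belief above $\tau^e$ is sent to $0$. Consequently every realized threshold satisfies $\tau_i<\tau^e$. I would then argue that the chain is positive recurrent, regenerating at the state $0$, which is reached with positive probability from any state with $\tau_i>0$ (precisely the event $\hat\tau_i>\tau^e$ identified in the proof of Theorem \ref{theorem_5}). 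This gives a unique stationary law $\pi$ supported on $[0,\tau^e)$, and the ergodic theorem yields $\frac{1}{n}\sum_{i=1}^{n}(1-\tau_i)\to 1-E_\pi[\tau]$ almost surely. Because $\tau<\tau^e$ holds $\pi$-almost surely and $\tau$ is bounded, $E_\pi[\tau^e-\tau]>0$, i.e.\ $E_\pi[\tau]<\tau^e$; hence the long-run reservation fraction under action-learning is $1-E_\pi[\tau]>1-\tau^e$. Comparing the two limits proves the claim.

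The main obstacle is this second step: making the ergodic statement rigorous on the continuous state space of the action-learning chain. I would need to verify positive (Harris) recurrence through the regeneration at $0$ and, more delicately, rule out the stationary mass escaping to the boundary point $\tau^e$, that is, confirm $\pi(\{\tau^e\})=0$ and $\tau<\tau^e$ $\pi$-a.s.\ rather than merely $\tau\le\tau^e$. The reset mechanism is exactly what prevents accumulation near $\tau^e$: from states close to $\tau^e$ the estimate $\hat\tau_i$ exceeds $\tau^e$ with probability bounded away from zero, forcing frequent returns to $0$ and keeping $E_\pi[\tau]$ strictly below $\tau^e$. Everything else reduces to the monotonicity of $BR$ and the convergence already established in Lemma \ref{lemma_2} and Theorem \ref{theorem_5}.
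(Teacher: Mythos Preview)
Your proposal is correct and follows essentially the same approach as the paper: both hinge on Lemma~\ref{lemma_2}, namely that $BR$ maps every belief into $[0,\tau^e)$, so under action-learning each realized threshold lies strictly below $\tau^e$, whereas under strategy-learning the thresholds converge to $\tau^e$. Your version is in fact more careful than the paper's own proof, which simply asserts that the action-learning average lies in $[0,\tau^e)$ without invoking the Ces\`aro, regeneration, or ergodic arguments you supply; the technical obstacles you flag (existence of the stationary law, ruling out mass at the boundary $\tau^e$) are not addressed there.
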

\begin{proof}
Denote the unique equilibrium by $\tau^e$. Consider an arbitrary step $i$ and assume that \textit{action-learning} is applied. If at step $i$ the strategy $\hat \tau_i \in [\tau^e,1]$, then all customers will choose $AR$ at the next step. If $\tau_i \in (\tau_i, \tau^e)$, then $\tau_{i+1}\in (0, \tau^e)$. Thus, in any realization, the strategy followed by all customers in all steps is a random variable that takes values between $0$ and $\tau^e$. In \textit{strategy-learning}, the strategy followed by all customers converges to $\tau^e$. Thus, the average fraction of customers not making $AR$ converges to a value between $0$ and $\tau^e$ under \textit{action-learning} and to $\tau^e$ under \textit{strategy-learning}. 
\end{proof}
Next, we present a simulated example that compares between the revenue under \textit{action-learning} and under \textit{strategy-learning}. The pseudo-code of the simulation is given in Algorithm 1. The inputs of the procedure are the arrival rate $\lambda$, the initial belief $\beta$, the reservation cost $C$ and the number of steps $l$.
\begin{algorithm}[H] 
\caption{Learning Simulation ($\lambda,\beta,C,l$)}
\begin{algorithmic}
\STATE{$\hat{\tau_1}=\beta$}
\FOR{$i \leftarrow 1$ to $l$ \COMMENT{iterating over all steps}} 
\STATE{ $D_{AR} \leftarrow 0$   \COMMENT{variable counting the number of reservations}\\
$D \leftarrow$ \textit{generate Poisson random variable} \COMMENT{the number of customers}
\FOR{$j \leftarrow 1$ to $D$ \COMMENT{iterating over all customers}}
\STATE{
$p\leftarrow$ \textit{generate random variable from U(0,1)} \COMMENT{the potential priority}\\
\IF{$p>\hat{\tau_i}$ \COMMENT{check if the potential priority is greater than the current belief}}
\IF{$W_{AR}(p)+C<W_{AR'}(\hat{\tau_i})$ \COMMENT{check if the customer is better off making AR}}
\STATE{$D_{AR}\leftarrow D_{AR} +1$} \COMMENT{increase the number of reservations by one}
\ENDIF
\ELSE{
\IF{$W_{AR}(\hat{\tau_i})+C<W_{AR'}(\hat{\tau_i})$  \COMMENT{check if the customer is better off making AR}}
\STATE{$D_{AR}\leftarrow D_{AR} +1$} \COMMENT{increase the number of reservations by one}
\ENDIF}
\ENDIF
}
\ENDFOR \\
\IF{strategy-learning}\vspace{-13pt} 
\STATE{\begin{flalign*}
\hat{\tau}_{i+1}\leftarrow \left\{\begin{array}{ll} 0 & \mbox{ if } \hat{\tau_i}>\tau^e, \\ \tau: W_{AR}(\tau)+C=W_{AR'}(\hat{\tau_i})  & \mbox{ if } \hat{\tau_i}\leq \tau^e, \\ \end{array}\right. \nonumber  &&\end{flalign*}} \COMMENT{compute the current strategy\vspace{6pt}}
\ENDIF
\IF{action-learning}\vspace{6pt}
\STATE{$\hat{\tau}_{i+1}\leftarrow 1- \frac{D_{AR}}{D}$} \COMMENT{estimate the current strategy} \vspace{6pt} 
\ENDIF
}
\ENDFOR
\end{algorithmic}
\end{algorithm}

\begin{example} \label{ex_1}
Consider a queue with parameters $\lambda=45$ and $\mu=60$. Let the reservation cost be $C=0.024$. The unique equilibrium (computed using \eqr{eq-12} and \eqr{eq-120}) is $\tau^e=0.1026$ (i.e., on a static game, on average, $89.74\%$ of the customers make AR). We run a simulation of $10,000$ steps. Each step lasts for one time unit (i.e., the average demand at each step is $45$).  We set the initial belief to be $\beta=\tau^e$. The average number of reservations per time unit is $40.3$ (i.e., on average, $89.5\%$ make $AR$) under \textit{strategy-learning} and $42.7$  (i.e., on average, $94.9\%$ make $AR$) under \textit{action-learning}. We conclude that, as Theorem \ref{theorem_33} states, when customers base their decisions on historic actions and not strategies, more customers make AR. Statistical analysis (one-tailed t-test) shows that the difference between the mean number of reservations under \textit{action-learning} and under \textit{strategy-learning} is statistically significant, with confidence level of $99\%$. In Figure~\ref{fig_ex1}, we plot the number of reservations, under \textit{action-learning} and under \textit{strategy-learning}. We use the same realization of customer arrivals in each case and we can see that at each iteration, the number of reservations is greater (or equal) under \textit{action-learning}.
\end{example}
\begin{figure}[t] 
 \centering \includegraphics[scale=1.2]{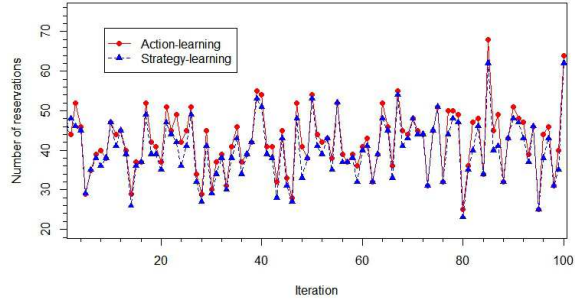}
  \caption{Simulation results. In a game with unique equilibrium, more customers make AR under \textit{action-learning} than under \textit{strategy-learning}.}
 \label{fig_ex1}
% \vspace{-20pt}
\end{figure}
We conclude that if the provider interest is that as many customers as possible will make reservations, then it is better off if customers gain information about previous actions rather than strategies.

\subsection{Learning with Multiple Equilibria}

\begin{lemma} \label{lemma_3}
In a game with multiple equilibria with thresholds $\tau^{e1}$, $\tau^{e2} $ and $1$:
\begin{enumerate}
\item  If a belief $\beta\in [0,\tau^{e1})$ , then  $BR(\beta) \in (\beta,\tau^{e1})$. 
\item  If a belief $\beta\in (\tau^{e1},\tau^{e2})$, then  $BR(\beta)=0$.
\item  If a belief $\beta\in (\tau^{e2},1)$ , then  $BR(\beta) \in (\beta,1]$. 
\end{enumerate}
\end{lemma}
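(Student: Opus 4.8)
The plan is to follow the template of the proof of Lemma~\ref{lemma_2}, exploiting the same structural facts but now accounting for the richer sign pattern created by three equilibria. First I would record the sign of the indifference gap $W_{AR}(\tau)+C-W_{AR'}(\tau)$. Recalling from \eqr{eq1} that $C(\tau)=W_{AR'}(\tau)-W_{AR}(\tau)$, and that in the multiple-equilibria regime ($\rho>1/2$ and $\underline{C}<C<\overline{C}$) the function $C(\tau)$ is unimodal with $C(\tau^{e1})=C(\tau^{e2})=C$ and $C(1)=\underline{C}<C$, this gap is positive on $[0,\tau^{e1})$, negative on $(\tau^{e1},\tau^{e2})$, and positive again on $(\tau^{e2},1]$. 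This is the only new ingredient relative to Lemma~\ref{lemma_2}; everything else is a restatement of the best-response computation, namely that a customer with potential priority $p$ facing belief $\beta$ chooses $AR$ exactly when $W(\beta,p)+C<W_{AR'}(\beta)$, where by \eqr{eq_121}, \eqr{eq_122} and \eqr{eq_12} we have $W(\beta,p)=W_{AR}(p)$ for $p\ge\beta$ and $W(\beta,p)=W_{AR}(\beta)$ for $p<\beta$.

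For Part~1 ($\beta\in[0,\tau^{e1})$) I would argue exactly as in Lemma~\ref{lemma_2}: since $W_{AR}(\beta)+C>W_{AR'}(\beta)$, no customer with $p<\beta$ makes $AR$, while at $p=\tau^{e1}$ we have $W_{AR}(\tau^{e1})+C=W_{AR'}(\tau^{e1})<W_{AR'}(\beta)$ because $W_{AR'}$ is decreasing and $\beta<\tau^{e1}$. Continuity and monotonicity of $W_{AR}(\cdot)$ then force a unique crossing $\tau\in(\beta,\tau^{e1})$ of $W_{AR}(\tau)+C=W_{AR'}(\beta)$, which is the best response. For Part~2 ($\beta\in(\tau^{e1},\tau^{e2})$) I would use the negative sign of the gap: $W_{AR}(\beta)+C<W_{AR'}(\beta)$ makes every customer with $p<\beta$ choose $AR$, and since $W_{AR}(p)\le W_{AR}(\beta)$ for $p\ge\beta$ the higher-priority customers choose $AR$ as well; hence $BR(\beta)=0$.

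The substantive new case, and the main obstacle, is Part~3 ($\beta\in(\tau^{e2},1)$), where the best response may hit the boundary value $1$ corresponding to the none-make-AR equilibrium. Here the gap is again positive, so $W_{AR}(\beta)+C>W_{AR'}(\beta)$ and no customer with $p\le\beta$ makes $AR$; in particular $BR(\beta)>\beta$. The delicate point is what happens as $p$ rises to $1$, so I would compare $W_{AR}(1)+C$ with $W_{AR'}(\beta)$. If $W_{AR}(1)+C\ge W_{AR'}(\beta)$, then monotonicity of $W_{AR}(\cdot)$ gives $W_{AR}(p)+C\ge W_{AR'}(\beta)$ for all $p\in[\beta,1]$, so by Assumption~\ref{ass_1} (indifferent customers choose $AR'$) no customer makes $AR$ and $BR(\beta)=1$. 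Otherwise $W_{AR}(1)+C<W_{AR'}(\beta)$ while $W_{AR}(\beta)+C>W_{AR'}(\beta)$, so a unique crossing $\tau\in(\beta,1)$ exists and is the best response. In both sub-cases $BR(\beta)\in(\beta,1]$, as claimed. The one fact I would keep explicit is that the none-make-AR threshold is genuinely consistent here, i.e.\ $W_{AR}(1)+C>W_{AR'}(1)$, which is exactly the condition $C>\underline{C}=C(1)$ that makes $\tau=1$ an equilibrium; this is what prevents the crossing analysis from ever pushing the best response past $1$.
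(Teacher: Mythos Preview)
Your proposal is correct and follows essentially the same approach as the paper: both derive the sign pattern of $W_{AR}(\cdot)+C-W_{AR'}(\cdot)$ from the two crossings at $\tau^{e1},\tau^{e2}$, and then read off $BR(\beta)$ using \eqr{eq_121}--\eqr{eq_12} together with the monotonicity of $W_{AR}$ and $W_{AR'}$. Your treatment of Part~3 is in fact more careful than the paper's (which simply invokes ``the same arguments as in Part~1''): you explicitly split on whether $W_{AR}(1)+C\ge W_{AR'}(\beta)$ to handle the boundary case $BR(\beta)=1$, and you correctly note that $C>\underline{C}=C(1)$ is exactly what makes this boundary consistent.
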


\begin{proof}
From \eqr{eq-8} and \eqr{eq-9}, we deduce that $W_{AR}(0)+C>W_{AR'}(0)$. Since, under unique \textit{some-make-AR} equilibrium, $W_{AR}(0)+C$ and $W_{AR'}(0)$ intersect twice at $\tau^{e1}$ and $\tau^{e2} $, we conclude that
\begin{align} \label{eq_520}
W_{AR}(\beta)+C>W_{AR'}(\beta) \quad \forall \beta\in \{[0,\tau^{e1}),(\tau^2_e,1)\}
\end{align}
and
\begin{align} \label{eq_525}
W_{AR}(\beta)+C<W_{AR'}(\beta) \quad \forall \beta\in (\tau^{e1},\tau^{e2}).
\end{align}

For proving part 1, assume that $\beta<\tau^{e1}$.  Since $W_{AR}(\tau^{e1})+C=W_{AR'}(\tau^{e1})$, and since $W_{AR}(\cdot)$ is a decreasing function we deduce that
\begin{align} \label{eq_511}
W_{AR}(\tau^{e1})+C<W_{AR'}(\beta).
\end{align}

From \eqr{eq_520} and \eqr{eq_511} we deduce that there exists a $\tau\in(\beta,\tau^{e1})$ such that 
\begin{align} \label{eq_512}
W_{AR}(\tau)+C=W_{AR'}(\beta).
\end{align}

Given the equation above  and using \eqr{eq_121} and \eqr{eq_122}, one can see that all customers with potential priority $p>\tau$ choose $AR$, while all customers with potential priority $p\le \tau$ choose $AR'$. This complete the proof of the first part of the lemma.

Now, assume that $\beta \in(\tau^{e1},\tau^{e2})$. Based on \eqr{eq_121} and \eqr{eq_122} and since $W_{AR}(\cdot)$ is a decreasing function, we deduce that
\begin{align}\label{eq-66}
W(\beta, p)\le W_{AR}(\beta),\quad \forall p \in [0,1].
\end{align}
From Eq.~\eqref{eq_525} and \eqr{eq-66}, we deduce that
\begin{align}
W(\beta, p)+C<W_{AR'}(\beta),\quad \forall p \in [0,1].
\end{align}
That is, all customers are better off choosing $AR$ and the best response to $\beta$ is $\tau=0$.
The third part of the lemma can be proved using the same arguments as in the proof of the first part of the lemma. 
\end{proof}

Next, we study the long-term outcome of a dynamic game with multiple equilibria.
\begin{theorem}\label{theorem_5}
A game with multiple equilibria converges to some-make-AR or none-make-AR equilibrium (depend on the initial belief) under \textit{strategy-learning} and to none-make-AR equilibrium under \textit{action-learning}.
\end{theorem}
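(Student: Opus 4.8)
The plan is to handle the two learning modes separately: strategy-learning by a deterministic monotone-convergence argument that mirrors the unique some-make-AR case, and action-learning by identifying an absorbing state of the induced Markov chain and proving it is reached almost surely.

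For \emph{strategy-learning} the recursion is deterministic, $\tau_{i+1}=BR(\tau_i)$, so I would partition the initial belief $\beta$ according to the three regions of Lemma~\ref{lemma_3}. If $\beta\in[0,\tau^{e1})$, part~1 gives $\tau_{i+1}\in(\tau_i,\tau^{e1})$ for every $i$, so $\{\tau_i\}$ is increasing and bounded above by $\tau^{e1}$; by continuity of $BR$ on $[0,\tau^{e1})$ its limit is a fixed point and hence equals $\tau^{e1}$, a some-make-AR equilibrium. If $\beta\in(\tau^{e1},\tau^{e2})$, part~2 yields $\tau_1=0$, after which the process climbs to $\tau^{e1}$ exactly as above. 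The points $\beta=\tau^{e1}$ and $\beta=\tau^{e2}$ are fixed points of $BR$ (here Assumption~\ref{ass_1} resolves the indifferent threshold customer into $AR'$), so the process stays put. Finally, if $\beta\in(\tau^{e2},1]$, part~3 produces an increasing sequence bounded above by $1$ whose only available fixed point in $(\tau^{e2},1]$ is the none-make-AR threshold $1$, so it converges to the none-make-AR equilibrium. This gives exactly the claimed dichotomy: the strategy-learning limit is a some-make-AR or a none-make-AR equilibrium depending on $\beta$.

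For \emph{action-learning} the state $\hat\tau_i$ evolves as a Markov chain, and the key structural fact is that $\hat\tau=1$ (none-make-AR) is absorbing: if $\hat\tau_i=1$ then $\tau_{i+1}=BR(1)=1$, since none-make-AR is an equilibrium whenever $C>\underline C$, which holds throughout the multiple-equilibria regime; consequently $d^{AR}_{i+1}=0$ and $\hat\tau_{i+1}=1$. I would then show this absorbing state is hit with probability one from any start. The mechanism is the same fluctuation used in the unique case: whenever the current strategy satisfies $\tau_i>0$, there is positive probability that the sampled demand produces $d^{AR}_i=0$ with $d_i>0$ (e.g.\ every arriving customer has potential priority at most $\tau_i$), forcing $\hat\tau_i=1$ in a single step. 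Because $BR$ drives the strategy strictly upward away from $0$ and makes it hover near $\tau^{e1}$, there are infinitely many steps at which $\tau_i$ is bounded away from $0$ and the one-step absorption probability is bounded below by some $\epsilon>0$; a conditional Borel--Cantelli argument then forces absorption at $\hat\tau=1$ almost surely.

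The main obstacle is the action-learning half, and specifically making rigorous that the one-step probability of reaching the absorbing state is bounded below infinitely often. The delicate point is that when $\tau_i$ is very close to $0$ the event $\hat\tau_i=1$ becomes impossible in one step (essentially everyone makes AR), so I must control the trajectory to guarantee $\tau_i$ does not drift permanently toward $0$; the strict upward drift of $BR$ off $0$, together with the noisy hovering near $\tau^{e1}$, is what I would use to extract the required subsequence of ``good'' steps with uniformly positive absorption probability. This is precisely where the multiple-equilibria case departs from the unique some-make-AR case: there an upward fluctuation past $\tau^{e}$ resets the best response to $0$ and yields a cycle, whereas here an upward fluctuation past $\tau^{e2}$ sends the best response into $(\tau^{e2},1]$, i.e.\ into the basin of the absorbing none-make-AR equilibrium, which is exactly why the chain converges rather than cycles.
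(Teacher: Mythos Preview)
Your proposal is correct and follows essentially the same approach as the paper: for strategy-learning you use the monotone-sequence argument driven by Lemma~\ref{lemma_3} (the paper simply cites ``the same arguments as in the proof of Theorem~\ref{theorem_5}''), and for action-learning you identify $\hat\tau=1$ as absorbing and argue it is reached from any state with probability one, which is exactly the paper's reasoning.

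One remark on the ``main obstacle'' you flag: the worry that $\tau_i$ may drift toward $0$ and kill the one-step absorption probability is unnecessary. From the proof of Lemma~\ref{lemma_3}, for $\beta\in[0,\tau^{e1})$ the best response is the unique $\tau$ solving $W_{AR}(\tau)+C=W_{AR'}(\beta)$; since both $W_{AR}$ and $W_{AR'}$ are decreasing, $BR$ is \emph{increasing} on $[0,\tau^{e1})$, hence $BR(\beta)\ge BR(0)>0$ there. The only way to get $\tau_i=0$ is via $\hat\tau_{i-1}\in(\tau^{e1},\tau^{e2})$, and then $\hat\tau_i=0$ forces $\tau_{i+1}=BR(0)$. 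Thus $\tau_i\in\{0\}\cup[BR(0),1]$ for all $i$, two consecutive zeros never occur, and on every step with $\tau_i>0$ the probability that all arriving customers have potential priority at most $\tau_i$ is bounded below by the fixed positive number $\mathbb{P}(\text{all }p\le BR(0))$. The uniform bound for Borel--Cantelli is therefore immediate, without any subsequence extraction or drift analysis. The paper's proof simply asserts the conclusion without making this explicit, so your version is in fact more careful; you just overstate the difficulty.
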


\begin{proof}
Using the same arguments as in the proof of Theorem \ref{theorem_5} and based on Lemma \ref{lemma_3}, one can show the following. Under \textit{strategy-learning}, a game with initial belief $\beta<\tau^{e2}$ converges to $\tau^{e1}$, while a game with initial belief $\beta\in(\tau^{e2},1]$ converges to $1$.

Under \textit{action-learning}, if at some step $i$, $\tau_i=1$ and \textit{none-make-AR} is an equilibrium, then at all future steps all customers will keep not making AR. Given any threshold strategy $\tau>0$ followed by all customers, there is a positive probability that the potential priority of all customers will be smaller than $\tau$, and hence none of the customers will make AR. If the game repeats infinite many times, then with probability one, at some point, none of the customers will make AR and the game will converge to \textit{none-make-AR} equilibrium.  
\end{proof}

\begin{example} \label{ex_2}
Consider a queue with parameters $\lambda=45$ and $\mu=60$. Let the reservation cost be $C=0.032$. Using \eqr{eq-12} and \eqr{eq-120} we compute the set of equilibria:  $ \tau^{e1}=0.22,\tau^e_2=0.5 ,\tau^e_3=1$. We set three different initial strategies: $\beta_1=0.2,\beta_2=0.4$ and $\beta_3=0.6$. We apply \textit{strategy-learning}. As Figure \ref{fig_ex_2} shows, within a few steps, the system converges to an equilibrium.
\end{example}

\begin{figure}[t]
  \centering \includegraphics[scale=1.2]{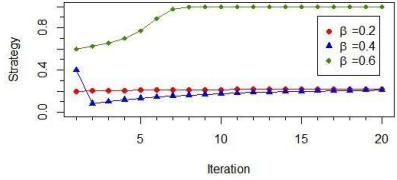}
  \caption{Convergence to equilibrium under \textit{strategy-learning}}
    \label{fig_ex_2}
\end{figure}
%\subsection{The Dynamics of a Game with Multiple Equilibria} \label{risk-taking}
%In this section, we assume that $\rho>0.5$ and $\underline C < C<\overline C$. Thus, the game has  two \textit{some-make-AR} equilibria and a \textit{none-make-AR} equilibrium. We denote the smaller and greater thresholds followed at the \textit{some-make-AR} equilibria by $\tau^{e1}$ and $\tau^e_2$, respectively. The three intervals are $[0,\tau^{e1})$, $(\tau^{e1},\tau^e_2)$ and $(\tau^e_2,1)$. Note that the first and third intervals are \textit{AR'-intervals}, while the second interval is an \textit{AR-interval}.
%
%\subsubsection{Strategy-learning}
%Based on  Lemma~\ref{lemma_2}, Lemma~\ref{lemma_3} and the proof of Theorem \ref{theorem_3}, one can show that when the initial belief belongs to the interval $[0,\tau^e_2]$, the game converges to $\tau^{e1}$ and when the initial belief belonging to the interval $(\tau^e_2,1]$ the game converges to $1$.
%\begin{corollary} \label{theorem_4}
%Under \textit{strategy-learning}, a game with multiple equilibria converges to $\tau^{e1}$, if $\beta<\tau^e_2$, and to $1$ otherwise.
%\end{corollary}

\subsection{Profit Maximization in Dynamic Games}
In this section, we assume that the reservation cost is a fee collected by the provider. Our goal is to find the fee that maximizes the provider revenue in the dynamic games setting. Under action learning, any fee that leads to multiple equilibria will eventually lead to zero revenue. Hence, we focus, in this section, on \textit{strategy-learning}.

Under \textit{strategy-learning} with multiple equilibria, the initial belief determines to which equilibrium the game will converge. To execute the analysis, we assume that the initial belief $\beta$ is a continues random variable that takes values between zero and one.

Consider a game with multiple equilibria $\{\tau^{e1}, \tau^{e2}, 1\}$. From Lemma \ref{lemma_2}, Lemma \ref{lemma_3} and Theorem \ref{theorem_5}, we deduce that if the initial belief is in $[0,\tau^{e2})$, then the game converges to $\tau^{e1}$, otherwise it converges to $1$ (i.e., zero reservations). Thus, with probability $\mathbb{P}(\beta<\tau^{e2})$ the strategy converges to $\tau^{e1}$ and with probability $\mathbb{P}(\beta>\tau^{e2})$ it converges to $1$. Thus, the excepted revenue of the dynamic game at steady state is 
\begin{align}\label{eq-667}
R_D(\tau^{e1},\tau^{e2})=\mathbb{P}(\beta<\tau^{e2})R(\tau^{e1}).
\end{align}
where $R(\cdot)$ is defined in \eqr{eq__1}. Since the expected revenue depends on both $\tau^{e1}$ and $\tau^{e2}$, we next find the relation between those two thresholds. By manipulating the equation $C(\tau^{e1})=C(\tau^{e2})$ (see \eqr{eq1} for definition of $C(\cdot)$), we get the following relation.
\begin{align}\label{eq-3}
\tau^{e2}=\left(\frac{1-\rho}{\rho} \right)^2\frac{1}{\tau^{e1}}.
\end{align}

%Using \eqr{eq-3}, we obtain
%\begin{align}\label{eq-667}
%R_D(\tau^{e1})=R(\tau^{e1})F_\beta\left(\left(\frac{1-\rho}{\rho} \right)^2\frac{1}{\tau^{e1}}\right) + 0\cdot \left(1-F_\beta\left(\left(\frac{1-\rho}{\rho} \right)^2\frac{1}{\tau^{e1}}\right)\right).
%\end{align}

Given the distribution of $\beta$ and using \eqr{eq-667}, \eqr{eq-3} and \eqr{eq2}, one can find the value of $\tau^{e1}$ that maximizes the revenue and, in turn, the optimal fee. For instance, let assume that $\beta$ is uniformly distributed in $[0,1]$. In this case, the revenue as a function of $\tau^{e1}$ is
\begin{align}
R_D(\tau^{e1})=\frac{(1-\tau^{e1})(1-\rho)}{2 \rho(1-\rho(\tau^{e1}-1))^2}.
\end{align} 

By computing the derivative of $R_D(\tau^{e1})$ with respect to $\tau^{e1}$, one can show that it decreases with  $\tau^{e1}$. Thus, when considering multiple equilibria, the optimal value of $\tau^{e1}$ is  ${\{ \min \tau^{e1} |\underline{C}<C(\tau^{e1})<\overline{C} \}}$. From \eqr{eq__2} we know that this value is $((1-\rho)/\rho)^2$ and is obtained when $C=\underline{C}$

Combining this result with Corollary \ref{corMD1} leads to the following theorem:
\begin{theorem}\label{theorem_333}
Under \textit{strategy-learning}, if the initial belief is uniformly distributed between $0$ and $1$, then the optimal fee is $\underline{C}$ when $\rho>2/3$ and $C^*$ when $\rho<2/3$.
\end{theorem}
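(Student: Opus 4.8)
The plan is to partition the provider's fee choices according to the equilibrium structure of Theorem~\ref{theorem1} and to compare the steady-state dynamic revenue across the resulting regimes, then invoke Theorem~\ref{corMD1} to locate where the static optimizer $C^*$ falls. Three regimes arise. Fees with $C<\underline C$ give a unique \textit{some-make-AR} equilibrium which, under \textit{strategy-learning}, is globally attracting, so the dynamic revenue equals the static revenue $R(\tau^e)\le R^*$. Fees with $C>\max(\underline C,\overline C)$ give a unique \textit{none-make-AR} equilibrium and hence zero revenue. Fees with $\underline C<C<\overline C$ (possible only when $\rho>1/2$) give multiple equilibria, where the dynamic revenue $R_D(\tau^{e1})=\mathbb P(\beta<\tau^{e2})R(\tau^{e1})$ has already been shown to decrease in $\tau^{e1}$ and is therefore maximized at the boundary $C=\underline C$, i.e.\ at $\tau^{e1}=((1-\rho)/\rho)^2$.

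For $\rho<2/3$, I would use Theorem~\ref{corMD1} to observe that $C^*$ already lies in the unique-equilibrium regime. Strategy-learning then converges to $\tau^{opt}$ regardless of the initial belief, so charging $C^*$ yields the global static maximum $R^*$ with certainty. Since every fee's dynamic revenue is bounded above by its static revenue, which never exceeds $R^*$, and no multiple-equilibria fee can match $R^*$ (the factor $\mathbb P(\beta<\tau^{e2})<1$ is a strict loss away from the boundary, and the boundary fee has $\tau^{e1}\ne\tau^{opt}$), the fee $C^*$ is optimal.

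For $\rho>2/3$, Theorem~\ref{corMD1} places $C^*$ inside the multiple-equilibria regime, so charging it degrades the revenue by the factor $\mathbb P(\beta<\tau^{e2})<1$ and it can no longer be optimal. I would then argue that the global dynamic optimum is $C=\underline C$ by treating the two remaining regimes separately: within the multiple-equilibria regime the monotonicity of $R_D$ forces the optimum to the boundary $\underline C$, where the revenue equals $R_g^*$; within the unique-equilibrium regime ($C<\underline C$) the static revenue $R(\tau^e)$ is increasing on the admissible interval because $\tau^{opt}>((1-\rho)/\rho)^2$ precisely when $\rho>2/3$, so its supremum is again $R_g^*$, approached as $C\to\underline C$. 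Both regimes thus peak at $\underline C$ with common value $R_g^*$, which strictly exceeds the degraded revenue at $C^*$; hence $\underline C$ is optimal.

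The delicate point is the boundary fee $C=\underline C$ itself: I must verify that the supremum of the unique-equilibrium revenues and the maximum of the multiple-equilibria revenues coincide continuously there (both equal $R_g^*$), so that $\underline C$ is an \emph{attained} global optimum rather than a limit that is never reached. This in turn rests on the single algebraic inequality $((1-\rho)/\rho)^2<\tau^{opt}\iff\rho>2/3$, which is exactly what makes the utilization threshold $2/3$ appear and guarantees that $R(\tau^e)$ is still increasing at the crossover threshold $((1-\rho)/\rho)^2$. Establishing this cleanly is the main obstacle, while the remaining steps follow directly from the monotonicity of $R_D$ and from Theorem~\ref{corMD1} already in hand.
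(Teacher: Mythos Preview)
Your proposal is correct and follows essentially the same approach as the paper: both rely on the monotonicity of $R_D(\tau^{e1})$ in the multiple-equilibria regime (already established in the text preceding the theorem) and on Theorem~\ref{corMD1} to locate $C^*$ relative to the regime boundary at $\rho=2/3$. The paper's own argument is in fact just the one-line remark ``Combining this result with Corollary~\ref{corMD1}'' and leaves the comparison between the unique-equilibrium regime and the multiple-equilibria regime for $\rho>2/3$ implicit. Your proposal is more careful on exactly this point: you explicitly verify that in the unique-equilibrium regime $R(\tau^e)$ is still increasing at the crossover threshold $((1-\rho)/\rho)^2$ via the inequality $((1-\rho)/\rho)^2<\tau^{opt}\Leftrightarrow\rho>2/3$, and you check that the two regimes match continuously at $C=\underline C$ (where $\tau^{e2}=1$ forces $\mathbb P(\beta<\tau^{e2})=1$, so $R_D=R_g^*$). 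These details are genuine and necessary for a complete argument; the paper simply takes them for granted.
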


\section{Conclusion and future work} \label{conclusions}
In this paper, we analyzed an M/D/1 queue that supports advance reservations. We associated the act of making reservation with a fixed reservation cost  and studied the impact of this cost on the behavior of customers. First, we showed that if the utilization of the queue is greater than $1/2$, then there is a range of reservation costs that lead to multiple equilibria including one where no customer makes a reservation. Furthermore, if the utilization is greater than $2/3$ and the reservation cost is a fee charged by the service provider, then the fee  value that maximizes the revenue from AR belongs to the aforementioned range. In order to evaluate whether the provider should charge a lower fee with guaranteed revenue or a higher but riskier fee (yielding several equlibria) we used the price of conservatism (PoC) metric and found the ratio between the two revenues. Specifically, when the utilization exceeds $2/3$, we showed that the PoC increases with the utilization and tends to infinity as the utilization approaches 1.

In the second part of the paper, we studied a dynamic version of the game. We showed that if the customers observe  previous strategies, then the game converges to an equilibrium. If the customers observe previous actions, then the game converge to a none-make-AR equilibrium, if such an equilibrium exists, and cycles otherwise. Finally, we develop a method to derive the revenue-maximizing fee under dynamic games. This method helps to determine the optimal control parameters  in a game with many equilibria. We expect the same kind of methods to prove useful for the analysis of other types of dynamic games with many equilibria.

%\ACKNOWLEDGMENT{
%This work was supported in part by the U.S. National Science Foundation under grant CNS-1117160 and CNS-1717858.}

\bibliographystyle{ormsv080}
\bibliography{learning_bib1}

\end{document}